\newcommand\xLeftrightarrowceva[2][]{%
  \ext@arrow 9999{\longLeftrightarrowfill@}{#1}{#2}}
\newcommand\longLeftrightarrowfill@{%
  \arrowfill@\Leftarrow =\Rightarrow}
\DeclareRobustCommand\mos[1]{\mathrel{|}\joinrel
\stackrel{#1}{\mathrel{=}}}
\DeclareRobustCommand\vds[1]{\,\mathrel{|}\joinrel\joinrel\joinrel
\frac{#1}{ \ \ \ }}
\newcommand\vso[2][]{%
  \ext@arrow 9999{\vsfill@}{#1}{#2}}
\newcommand\vsfill@{%
  \arrowfill@ \mid\ -}
\newcommand{\setmod}[1]{\textsc{#1}}
\newcommand{\logl}{\mathbf{K}\Lambda}
\newcommand{\ri}{\rightarrow}
\newcommand{\tr}{\sigma}
\newcommand{\foss}{Form_s}
\newcommand{\srb}{\sigma}
\newcommand{\vp}{\varphi}
\newcommand{\si}[2]{\Sigma_{{#1}_1 \cdots {#1}_{#2}, {#1} }}
\newcommand{\mb}{\scriptscriptstyle{\Box}}
\newcommand{\spt}{\sigma^{\mb}}
\theoremstyle{plain}
\newtheorem{theorem}{Theorem}[section]
\newtheorem{lemma}[theorem]{Lemma}
\newtheorem{remark}[theorem]{Remark}
\newtheorem{proposition}[theorem]{Proposition}
\newtheorem{definition}[theorem]{Definition}
\newtheorem{corollary}[theorem]{Corollary}
\newtheorem{example}[theorem]{Example}
\newcommand{\RNum}[1]{\uppercase\expandafter{\romannumeral #1\relax}}
\newcommand{\lra}{\leftrightarrow}
\begin{document}

\normalem

  \title{A many-sorted polyadic modal logic}
  \author{Ioana Leu\c stean, Natalia Moang\u a  and Traian Florin \c Serb\u anu\c t\u a\\
{\small Faculty of Mathematics and Computer Science, }\\
{\small University of Bucharest, Bucharest, Romania} \\
{\small ioana@fmi.unibuc.ro}  \hspace*{0.3cm}  {\small natalia.moanga@drd.unibuc.ro  \hspace*{0.3cm} traian.serbanuta@fmi.unibuc.ro}  
}
\date{}

\maketitle
 
\begin{abstract}

  {We propose a general system that combines the powerful features of modal logic  and many-sorted reasoning. Its algebraic semantics leads to a many-sorted generalization of boolean algebras with operators, for which we prove the analogue of the J\'{o}nsson-Tarski theorem.  Our goal was to deepen the connections between modal logic and program verification, and we test the expressivity of our system by defining a small imperative language and its operational semantics.}

\medskip 

\noindent {\em Keywords}: Polyadic modal logic, many-sorted logic,  boolean algebras with operators, many-sorted algebras,   J\'{o}nsson-Tarski theorem, operational semantics.
\end{abstract}

\section{Introduction}

In this paper we define a many-sorted polyadic modal logic, together with its corresponding  algebraic theory. 
The idea is not new: in \cite{manys3,manys2} two-sorted systems are analyzed and we used them as references for our approach, while in
\cite{manys1,manys6, manys4} a general theory is developed in a coalgebraic setting.  However, to our knowledge, the general framework presented in this paper is new.   

 Our language is determined by a fixed, but arbitrary,  many-sorted signature and a set of many-sorted propositional variables. The transition from mono-sorted to many-sorted setting is a smooth process and we follow closely the developments from \cite{mod}.
We define appropriate frames and models, the generalized construction of the canonical model and we prove the completeness of our results. The distinction between local and global deduction from the mono-sorted setting is deepened in our version: locally we consider only hypotheses of the same sort, while globally the set of hypotheses is a many-sorted set. The global deduction is analyzed in a distinct section, where we also prove a corresponding generalization of the deduction theorem. 
In order to investigate the algebraic completeness, we introduce a many-sorted generalization of boolean algebras with operators and we prove the analogue of J\'{o}nsson-Tarski theorem. We mention that similar structures were defined in \cite{manys1}, but in that case the operators are unary operations while, in our setting, they have arbitrary arities.

While the transition from the mono-sorted logic to a many-sorted one  is a smooth process, we see our system as a step towards deepening the connection between modal logic and program verification, and we test its expressivity in the last section. 

Our research was inspired by \textit{Matching logic} \cite{rosu},  a logic for program specification and verification, in which one can represent the mono-sorted polyadic modal logic. The propositional calculus defined in this paper can be seen as the propositional counterpart of Matching logic.

In Section \ref{mainlog} we follow a standard approach in order to define the canonical model and to prove the completeness theorem. In this section only the local deduction is considered. Section \ref{secglobal} is dedicated to global deduction. Section \ref{secalg} contains the algebraic semantics. In Section \ref{exlog}  we relate our system  with Matching logic and we further compare our approach  with similar systems  from \cite{manys2,manys3,contextModal}. Finally, in the last section, we define within our logic, both the syntax and the operational semantics of a small imperative language such that program execution is modeled as logical
inference.

\section{The many-sorted polyadic modal logic ${\mathcal M}{\mathcal L}_S$}\label{mainlog}

In this section we follow closely the development of the polyadic modal logic from \cite{mod}. Fixing a many-sorted signature $(S,\Sigma)$ we investigate the general theory of many-sorted modal logics based on $(S, \Sigma)$. For these systems we define the syntax, the semantics, the local deduction, the canonical model and we prove the completeness results. Our system is a generalization of the two-sorted modal logic defined in \cite{manys2}.   {Some proofs in the paper are straightforward generalizations of the mono-sorted case; however we sketch them in order to  keep the paper self-contained.}  
\subsection{Formulas, frames and models}

Let $(S, \Sigma)$ be a many-sorted signature. A set of {\em $S$-sorted variables} is an $S$-sorted set $P=\{P_s\}_{s\in S}$
 such that  $P_s\neq \emptyset$ for any $s\in S$ and 
 $P_{s_1} \cap P_{s_2} = \emptyset $ for any  $s_1 \neq s_2$ in $S$.

An $(S,\Sigma)$-{\em modal language} $\mathcal{ML}_{(S,\Sigma)}(P) $ is built up using the many-sorted signature $(S,\Sigma)$ and a set of propositional variables $P$. 
 
 In the sequel we assume $(S, \Sigma)$ and $P$ are fixed. For brevity, $\mathcal{ML}_{(S,\Sigma)}(P) $ will be denoted $\mathcal{ML}_S$. For any $n\in {\mathbb N}$ we denote  $[n]:=\{1,\ldots, n\}$ and $\Sigma_{s_1\ldots s_n,s}=\{\sigma\in \Sigma\mid \sigma:s_1
\ldots s_n\to s\}$ for any $s,s_1,\ldots, s_n\in S$.

\begin{definition}
The set of formulas of $\mathcal{ML}_S$ is an $S$-indexed family $Form_S=\{Form_s\}_{s\in S}$  inductively defined as follows:
 
\begin{itemize}
\item $P _s \subseteq  \foss$ for any $s\in S$,
\item if $\phi_1 \in \foss$ then $\neg \phi_1 \in \foss$ for any $s\in S$,
\item if $\phi_1, \phi_2 \in \foss$  then $\phi_1 \vee \phi_2 \in \foss $ for any $s\in S$,
\item if $\srb \in \Sigma_{s_1 \ldots s_n,s}$ and $\phi_1 \in Form_{s_1},\ldots, \phi_n\in Form_{s_n}$  then $\srb (\phi_1, \ldots , \phi_n ) \in \foss$.
\end{itemize}
\end{definition}

As usual, for any $s\in S$, $\phi_1$, $\phi_2\in Form_s$  we set
$\phi_1\wedge\phi_2 :=\neg(\neg\phi_1\vee\neg\phi_2)$ and $\phi_1\ri\phi_2 :=\neg\phi_1\vee\phi_2$. For any $s\in S$ and a fixed $p\in P_s$ we define
$\perp_s:= p\wedge \neg p$ and $\top_s=\neg\perp_s$. 
Moreover,  if $\srb \in \Sigma_{s_1 \ldots s_n,s}$  is a non-nullary operation and $\phi_i\in Form_{s_i}$ for any $i \in  [n] $, the {\em dual operation} is   
\begin{center}
$\spt (\phi_1, \ldots , \phi_n ) := \neg\srb (\neg\phi_1, \ldots , \neg\phi_n ).$
\end{center}

In order to define the semantics we introduce the  $(S,\Sigma)${\em -frames} as $S$-sorted relational structures.

\begin{definition}
 An $(S,\Sigma)${\em -frame} is a tuple 
 $\mathcal{F} =(\mathcal{W}, \mathcal{R})$
 such that:
\begin{itemize}
\item  $\mathcal{W} =\{ W_s \}_{s\in S}$ is an  $S$-sorted set of worlds and $W_s\neq \emptyset$ for any $s\in S$,
\item $\mathcal{R}=\{\mathcal{R}_{\srb}\}_{\srb \in \Sigma}$ such that $\mathcal{R}_{\srb} \subseteq  W_s \times W_{s_1} \times \ldots \times W_{s_n}$  for any $\srb \in \Sigma_{s_1 \ldots s_n,s}$.
\end{itemize}

If $\mathcal{F}$ is an $(S,\Sigma)$-frame, then an $(S,\Sigma)${\em -model based on} $\mathcal{F}$ is a pair     \mbox{$\mathcal{M}= (\mathcal{F}, \rho)$} where \mbox{ $\rho: P \ri \mathcal{P(W)} $} is an {\em $S$-sorted valuation function} such that \mbox{$\rho_s : P_s \ri \mathcal{P}(W_s)$} for any $s\in S$. The model $\mathcal{M}= (\mathcal{F}, \rho)$ will be simply denoted by  $\mathcal{M}= (\mathcal{W}, \mathcal{R}, \rho)$. 

Following \cite{mod},  if $\setmod{C}$ is a set of frames then we say that a model $\mathcal{M}$ {\em  is from} $\setmod{C}$ if 
it is based on a frame from $\setmod{C}$.
\end{definition}

In the sequel we define the {\em  satisfaction relation}.

\begin{definition}
Let $\mathcal{M}= (\mathcal{W}, \mathcal{R}, \rho)$ be an $(S,\Sigma)$-model, $s\in S$, $w\in W_s$ and  $\phi \in \foss$. We define 
\vspace*{-0.2cm}
$$\mathcal{M},w\mos{s} \phi$$
by induction over $\phi$ as follows:

\begin{itemize}
\item $\mathcal{M},w \mos{s} p$ iff $w\in \rho_s(p)$
\item $\mathcal{M},w \mos{s} \neg \psi$ iff $\mathcal{M},w \not\mos{s}\psi$
\item $\mathcal{M},w \mos{s} \psi_1 \vee \psi_2$ iff $\mathcal{M},w \mos{s} \psi_1$ or $\mathcal{M},w \mos{s} \psi_2$ 
\item if $\srb \in \Sigma_{s_1 \ldots s_n,s} $  then $\mathcal{M},w \mos{s} \srb (\phi_1, \ldots , \phi_n )$ iff there exists  $(w_1,\ldots,w_n) \in W_{s_1}\times\cdots\times W_{s_n}$  such that $\mathcal{R}_{\srb} ww_1\ldots w_n$ and
 $\mathcal{M},w_i  \mos{s_i} \phi_i$ for any $i \in  [n] $.
\end{itemize}
\end{definition}

We note that:

\begin{itemize}
\item[(1)] When $\srb \in \Sigma_{\lambda ,s}$  we have  $\mathcal{M},w \mos{s} \tr$ iff $w \in \mathcal{R}_{\srb}$.
\item[(2)] For $\srb \in \Sigma_{s_1 \ldots s_n,s}$ we have
$\mathcal{M},w \mos{s} \spt (\phi_1, \ldots , \phi_n )$ iff $\mathcal{R}_{\srb} ww_1\ldots w_n$  implies  $\mathcal{M},w_i \mos{s_i} \phi_i$ for some $i \in  [n] $
for any $(w_1,\ldots,w_n) \in W_{s_1}\times\cdots\times W_{s_n}$ 
 \end{itemize}

In this section we study the local deduction, while the global one will be investigated in Section \ref{secglobal}. Note that in our sorted setting the local approach implies  that the set of hypothesis and the conclusion have the same sort. Therefore we explicitly write the sort in the definition of the local deduction.

\begin{definition}
 Let $s\in S$, $\phi \in\foss$ and assume ${\mathcal M}$ is an $(S,\Sigma)$-model. We say that $\phi$ is  {\em  universally true} in  $\mathcal{M}$, and we denote $\mathcal{M} \mos{s} \phi$, if  $\mathcal{M},w \mos{s} \phi$ for all $w\in W_s$. If $\Phi_s\subseteq Form_s$, $\mathcal{M} \mos{s} \Phi_s$ is defined as usual. We say that $\Phi_s$ is {\em satisfiable} if there is a model $\mathcal{M}$ and $w\in W_s$ such that $\mathcal{M},w\mos{s} \Phi_s$.

Assume that $\setmod{C}$ is a class of frames or a class of models. If $s\in S$ and  $\Phi_s\cup\{\phi\}\subseteq Form_s$ then we say that $\phi$ is a {\em local semantic consequence of} $\Phi_s$ {\em  over} $\setmod{C}$, and we write $\Phi_s\mos{s}_\setmod{C}\phi$ if $\mathcal{M},w\mos{s} \Phi_s$ implies $\mathcal{M},w\mos{s} \phi$ for any $(S,\Sigma)$-model $\mathcal M$ from $\setmod{C}$  and for any $w\in W_s$ (we simply denote $\Phi_s\mos{s}\phi$ when $\setmod{C}$ is the class of all frames). 
\end{definition}


\subsection{The deductive system}\label{mainlogded}
Recall that in our setting any variable uniquely  determines its sort, and the sort of a formula is uniquely determined by the sorts of its variables.
Consequently, the {\em  uniform substitution} is $S$-sorted, meaning that a variable of sort $s\in S$ will be uniformly replaced only by a formula of the same sort.
Moreover, for any theorem of classical logic, there exists a corresponding one in $Form_s$ for any sort $s\in S$.

 We now define ${\mathbf K}_{(S,\Sigma)}$, a generalization of the modal system $\mathbf K$ (see \cite{mod} the the mono-sorted version). Hence ${\mathbf K}_{(S,\Sigma)}=\{{\mathbf K}_s\}_{s\in S}$ is the least $S$-sorted set of formulas with the following properties:
\begin{itemize}
\item[(a0)] for any $s\in S$, if $\alpha\in Form_s$ is a theorem in classical logic, then $\alpha\in {\mathbf K}_s$, 
\item[(a1)]  the following formulas are in ${\mathbf K}_s$

\medskip

\begin{itemize}
\item[($K^i_\sigma$)] $\,\,\,\spt(\psi_1, \ldots , \phi\rightarrow \chi, \ldots , \psi_n) \ri $

\hspace*{2cm}$\ri ( \spt (\psi_1, \ldots , \phi, \ldots , \psi_n) \ri \spt (\psi_1, \ldots ,\chi , \ldots , \psi_n))$

\medskip
\item[($Dual_\sigma$)] $\srb (\psi_1,\ldots ,\psi_n )\leftrightarrow \neg \spt (\neg \psi_1,\ldots ,\neg \psi_n )$ for any $n\geq 1$, $i\in [n]$,\\  \mbox{  $\srb\in \Sigma_{s_1\cdots s_n,s}$,$\psi_{s_1} \in Form_{s_1} $, $\ldots$, $\psi_{s_n} \in Form_{s_n}$}
and \mbox{$\phi,\chi\in Form_{s_i}$.}

\end{itemize}
\end{itemize}

The  deduction rules are  {\em  Modus Ponens} and {\em   Universal Generalization}:
\medskip

{\bf(MP$_s$)}$\,\,\,\displaystyle\frac{\phi\,\,\,\phi\to\psi}{\psi}$
where $s\in S$ and $\phi,\psi\in Form_s$

{\bf (UG$_\sigma^i$)} 
$\,\,\,\displaystyle\frac{\phi}{\spt (\psi_{s_1}, \ldots \psi_{i-1},\phi, \psi_{i+1},\ldots, \psi_{s_n})}$ where $n\geq 1$,   $\srb\in \Sigma_{s_1\cdots s_n,s}$, 

\medskip

\mbox{$i \in  [n] $},  $\srb\in \Sigma_{s_1\cdots s_n,s}$, $\psi_{s_1} \in Form_{s_1} $, $\ldots$, $\psi_{s_n} \in Form_{s_n} $ 
and $\phi\in Form_{s_i}$

\medskip

(the n-place operator $\spt$ is associated with $n$ generalization rules).
\medskip

Since $(S,\Sigma)$ is fixed, we simply write $\mathbf K$  instead of ${\mathbf K}_{(S,\Sigma)}$.

\begin{definition}\label{clsub}
Let $\Lambda\subseteq Form_S$ be an $S$-sorted set of formulas. The  {\em normal modal logic} defined by $\Lambda$ is ${\mathbf K}\Lambda=\{{\mathbf K}\Lambda_s\}_{s\in S}$ where
\begin{center}
${\mathbf K}\Lambda_s:={\mathbf K}_s\cup\left\{\lambda^\prime\in Form_s\mid  \lambda^\prime \mbox{ is obtained by uniform substitution}\right.$

 $\left. \mbox{\hspace*{6cm}applied to a formula }\lambda\in \Lambda_s \right\}$
\end{center}
\end{definition}

In the sequel we assume $\Lambda\subseteq Form_s$ is an $S$-sorted set of formulas and we investigate the normal modal logic $\mathbf{K}\Lambda$. Note that, in our approach, a logic is defined by its axioms.

\begin{definition}
Assume that $n\geq 1$, $s_1,\ldots, s_n\in S$ and  $\phi_i\in Form_{s_i}$ for any $i \in  [n] $. The sequence $\phi_1,\ldots, \phi_n$ is a {\em $\logl$-proof for} $\phi_n$  if,  for any $i \in  [n] $,  $\vp_i$ is in $\logl_{s_i}$  or $\vp_i$ is inferred from $\vp_1,\ldots, \vp_{i-1}$ using {\em modus ponens} and {\em universal generalization}. If $\phi$ has a proof in $\logl$ then we say that $\phi$ is a {\em theorem} and we write
$\vds{s}_{\logl}\phi$ where $s$ is the sort of $\phi$. 
\end{definition}

\begin{proposition}\label{genprop} 
Let $\sigma\in\Sigma_{s_1\ldots s_n,s}$, $i\in[n]$,  $\varphi, \phi \in Form_{s_i}$  and  $\psi_{s_j} \in Form_{s_j}$ for any $j\in  [n] $. The following hold:
\begin{itemize}
\item[(i)] \label{ded1}  ${\vds{s_i}}_{\logl} \phi\ri \varphi$
implies 

\hspace*{1.5cm}${\vds{s}}_{\logl} 
\spt(\ldots,\psi_{i-1} ,\phi,\psi_{i+1},\ldots)\ri
\spt (\ldots ,\psi_{i-1} ,\varphi, \psi_{i+1},\ldots)  $

\item[(ii)] \label{axsquare}
 ${\vds{s}}_{\logl}  \spt(\ldots,\psi_{i-1} , \phi\wedge \varphi,\psi_{i+1}, \ldots ) \leftrightarrow$
 
 \hspace*{1.5cm} $\leftrightarrow ( \spt (\ldots,\psi_{i-1} , \phi,\psi_{i+1}, \ldots ) \wedge \spt (\ldots,\psi_{i-1} ,\varphi ,\psi_{i+1}, \ldots )) $ 
\item[(iii)] 
${\vds{s}}_{\logl}  \srb(\ldots,\psi_{i-1} , \phi\vee \varphi,\psi_{i+1}, \ldots ) \leftrightarrow$
 
 \hspace*{1.5cm} $\leftrightarrow ( \srb (\ldots,\psi_{i-1} , \phi,\psi_{i+1}, \ldots ) \vee \srb (\ldots,\psi_{i-1} ,\varphi ,\psi_{i+1}, \ldots )) $
\item[(iv)] ${\vds{s_i}}_{\logl} \phi\leftrightarrow \varphi$
implies 

\hspace*{1.5cm}${\vds{s}}_{\logl} 
\srb(\ldots,\psi_{i-1} ,\phi,\psi_{i+1},\ldots)\leftrightarrow
\srb (\ldots ,\psi_{i-1} ,\varphi, \psi_{i+1},\ldots)  $

\end{itemize}
\end{proposition}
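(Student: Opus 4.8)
The plan is to prove (i) directly from the axioms and the two inference rules, and then to obtain (ii), (iv) and (iii), in this order, as purely propositional consequences of (i) together with the duality axiom ($Dual_\sigma$). Two conventions streamline the write-up. First, since the slots $j\neq i$ always carry the fixed formulas $\psi_{s_j}$, I abbreviate $\spt(\ldots,\xi,\ldots)$ for $\spt(\psi_{s_1},\ldots,\psi_{i-1},\xi,\psi_{i+1},\ldots,\psi_{s_n})$, and likewise for $\srb$. Second, every tautology of classical propositional logic I invoke is, by (a0), already a theorem of $\logl_{s_i}$ or of $\logl_s$ in the appropriate sort, and hence can be combined with theorems already obtained via (MP$_s$); I will say ``by propositional reasoning'' for such steps. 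Note also that the only slot ever modified below is the $i$-th one, of sort $s_i$, so each substitution performed is a legal $S$-sorted uniform substitution.

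For (i): assuming ${\vds{s_i}}_{\logl}\phi\ri\varphi$, the rule (UG$_\sigma^i$) gives ${\vds{s}}_{\logl}\spt(\ldots,\phi\ri\varphi,\ldots)$; the pertinent instance of ($K^i_\sigma$), with $\chi:=\varphi$, is the theorem ${\vds{s}}_{\logl}\spt(\ldots,\phi\ri\varphi,\ldots)\ri(\spt(\ldots,\phi,\ldots)\ri\spt(\ldots,\varphi,\ldots))$, and one application of (MP$_s$) yields the conclusion. For (ii): applying (i) to the tautologies $\phi\wedge\varphi\ri\phi$ and $\phi\wedge\varphi\ri\varphi$ and then reasoning propositionally gives ${\vds{s}}_{\logl}\spt(\ldots,\phi\wedge\varphi,\ldots)\ri(\spt(\ldots,\phi,\ldots)\wedge\spt(\ldots,\varphi,\ldots))$; for the converse, apply (i) to the tautology $\phi\ri(\varphi\ri(\phi\wedge\varphi))$ and chain the result with the ($K^i_\sigma$) instance $\spt(\ldots,\varphi\ri(\phi\wedge\varphi),\ldots)\ri(\spt(\ldots,\varphi,\ldots)\ri\spt(\ldots,\phi\wedge\varphi,\ldots))$, obtaining, after propositional rearrangement, ${\vds{s}}_{\logl}(\spt(\ldots,\phi,\ldots)\wedge\spt(\ldots,\varphi,\ldots))\ri\spt(\ldots,\phi\wedge\varphi,\ldots)$; conjoining the two directions gives the stated biconditional.

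For (iv): from ${\vds{s_i}}_{\logl}\phi\lra\varphi$ we get ${\vds{s_i}}_{\logl}\neg\phi\ri\neg\varphi$ and ${\vds{s_i}}_{\logl}\neg\varphi\ri\neg\phi$, hence by (i) applied both ways ${\vds{s}}_{\logl}\spt(\ldots,\neg\phi,\ldots)\lra\spt(\ldots,\neg\varphi,\ldots)$, hence ${\vds{s}}_{\logl}\neg\spt(\ldots,\neg\phi,\ldots)\lra\neg\spt(\ldots,\neg\varphi,\ldots)$, and two uses of ($Dual_\sigma$) together with transitivity of $\lra$ give ${\vds{s}}_{\logl}\srb(\ldots,\phi,\ldots)\lra\srb(\ldots,\varphi,\ldots)$; the same computation without the outer negations records, as a by-product, the congruence rule for $\spt$. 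For (iii): using ($Dual_\sigma$) rewrite $\srb(\ldots,\phi\vee\varphi,\ldots)$ as $\neg\spt(\ldots,\neg(\phi\vee\varphi),\ldots)$; by the $\spt$-congruence just noted and the tautology $\neg(\phi\vee\varphi)\lra(\neg\phi\wedge\neg\varphi)$ this is provably equivalent to $\neg\spt(\ldots,\neg\phi\wedge\neg\varphi,\ldots)$; by (ii) and propositional reasoning it is in turn equivalent to $\neg\spt(\ldots,\neg\phi,\ldots)\vee\neg\spt(\ldots,\neg\varphi,\ldots)$; and a final appeal to ($Dual_\sigma$) turns the two disjuncts into $\srb(\ldots,\phi,\ldots)$ and $\srb(\ldots,\varphi,\ldots)$. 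Transitivity of $\lra$ assembles (iii).

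I expect no genuine difficulty here: the mathematical content is just $\mathbf K$-style normality reasoning, and the only points to watch are clerical — holding the fixed arguments $\psi_{s_j}$ ($j\neq i$) unchanged across every step, checking (automatically, since only the $i$-th slot is touched) that each substitution respects sorts, and being disciplined about invoking every ``classical'' step as an instance of (a0) followed by (MP$_s$) rather than informally. The single place where the modal machinery is actually used is the proof of (i), through ($K^i_\sigma$) and (UG$_\sigma^i$); parts (ii)--(iv) then follow from (i), ($Dual_\sigma$) and propositional logic alone.
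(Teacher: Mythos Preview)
Your proof is correct and is exactly the standard $\mathbf K$-style argument one would expect. The paper, however, states this proposition without proof (it falls under the authors' blanket remark that some proofs are straightforward generalizations of the mono-sorted case), so there is nothing to compare against; your write-up supplies precisely the missing details.
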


\begin{remark}\label{alt}
As in the mono-sorted case, the logic $\logl$ can be defined replacing ($K$) and ($Dual$) with 

$\srb(\psi_{1},  \ldots , \psi_{n}) \leftrightarrow \perp_{s}$ iff $\psi_i = \perp_{s_i}$ for some $i\in [n]$
 
$ \srb(\psi_{1}, \ldots , \phi\vee \varphi, \ldots , \psi_{n}) \leftrightarrow ( \srb (\psi_{1}, \ldots , \phi, \ldots , \psi_{n}) \vee \srb (\psi_{1}, \ldots ,\varphi , \ldots , \psi_{n})) $

\noindent  and ({\bf UG}) with 
 
  \mbox{$ \vds{s_i} _{\logl} \phi\ri \varphi$ implies\\$\vds{s}_{\logl} \srb (\psi_{1}, \ldots ,\phi, \ldots ,\psi_{n}) \ri \srb (\psi_{1}, \ldots ,\varphi, \ldots ,\psi_{n})  $}
  \medskip
   
 \noindent for any $n\geq 1$, $i\in [n]$,  $\srb\in \Sigma_{s_1\cdots s_n,s}$, $\psi_{s_1} \in Form_{s_1} $, $\ldots$, $\psi_{s_n} \in Form_{s_n}$ , $\phi,\chi\in Form_{s_i}$
\end{remark}

\begin{definition} \label{dedhyp} If $s\in S$, $\Phi_s \subseteq Form_s$  and $\phi\in Form_s$ then we say that $\phi$ is {\em locally provable from} $\Phi_s$ {\em in} $\logl$, and we write $\Phi_s{\vds{s}}_{\logl}\phi$,  if there are $\phi_1, \ldots, \phi_n\in \Phi_s$  such that   
$\vds{s}_{\logl}(\phi_1\wedge \ldots \wedge \phi_n) \ri \phi$. 
\end{definition}

Note that, apart from the usual features of local deduction, in our setting locality also implies all the hypothesis and the conclusion must have the same sort. We can state now the local version of the deduction theorem. The proof is an easy generalization of its analogue from classical logic.

\begin{theorem}(Local deduction theorem for ${\mathbf K}\Lambda$)\label{locdedth}

\begin{center} 
$\Phi_s\vds{s}_{\logl}\vp \ri \psi$  iff   $\Phi_s \cup \{\vp\}_s\vds{s}_{\logl}\psi $

\vspace*{3mm} for any $s\in S$ and $\Phi_s\cup\{\vp,\psi\}\subseteq Form_s$.
\end{center}
\end{theorem}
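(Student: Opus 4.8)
The plan is to prove both implications of the biconditional by reducing them, via the definition of local provability from hypotheses (Definition \ref{dedhyp}), to statements about plain theoremhood $\vds{s}_{\logl}$, and then to invoke classical propositional reasoning inside $Form_s$ together with (MP$_s$). Since all hypotheses and both formulas $\vp,\psi$ share the single sort $s$, no genuinely many-sorted phenomenon intervenes and the argument runs exactly as in the mono-sorted (indeed classical) case; I will spell it out only to keep the paper self-contained.

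\medskip

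\noindent\emph{($\Rightarrow$)} Suppose $\Phi_s\vds{s}_{\logl}\vp\ri\psi$. By Definition \ref{dedhyp} there are $\phi_1,\ldots,\phi_n\in\Phi_s$ with $\vds{s}_{\logl}(\phi_1\wedge\cdots\wedge\phi_n)\ri(\vp\ri\psi)$. Now I would note that the classical tautology $\big((\phi_1\wedge\cdots\wedge\phi_n)\ri(\vp\ri\psi)\big)\ri\big((\phi_1\wedge\cdots\wedge\phi_n\wedge\vp)\ri\psi\big)$ lies in $\logl_s$ by (a0), so one application of (MP$_s$) gives $\vds{s}_{\logl}(\phi_1\wedge\cdots\wedge\phi_n\wedge\vp)\ri\psi$. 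Since $\phi_1,\ldots,\phi_n,\vp\in\Phi_s\cup\{\vp\}_s$, this is exactly a witness for $\Phi_s\cup\{\vp\}_s\vds{s}_{\logl}\psi$.

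\medskip

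\noindent\emph{($\Leftarrow$)} Conversely, suppose $\Phi_s\cup\{\vp\}_s\vds{s}_{\logl}\psi$. By Definition \ref{dedhyp} there are finitely many formulas from $\Phi_s\cup\{\vp\}_s$ whose conjunction classically implies $\psi$; separating out the (at most one) occurrence of $\vp$, we obtain $\phi_1,\ldots,\phi_n\in\Phi_s$ with $\vds{s}_{\logl}(\phi_1\wedge\cdots\wedge\phi_n\wedge\vp)\ri\psi$ (if $\vp$ did not actually appear, weaken by conjoining it, which is again justified by a classical tautology and (MP$_s$); if $\Phi_s$ contributed nothing, use $\top_s$ as a dummy conjunct). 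Then the classical tautology $\big((\phi_1\wedge\cdots\wedge\phi_n\wedge\vp)\ri\psi\big)\ri\big((\phi_1\wedge\cdots\wedge\phi_n)\ri(\vp\ri\psi)\big)$ is in $\logl_s$ by (a0), and (MP$_s$) yields $\vds{s}_{\logl}(\phi_1\wedge\cdots\wedge\phi_n)\ri(\vp\ri\psi)$, i.e.\ $\Phi_s\vds{s}_{\logl}\vp\ri\psi$.

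\medskip

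There is essentially no obstacle here: the only points requiring mild care are the bookkeeping of the finite conjunction of hypotheses (handling the empty case via $\top_s$, and the case where $\vp$ is absent from the chosen witnesses), and the observation that the relevant currying/uncurrying steps are genuinely classical tautologies and hence available in $\logl_s$ through (a0). The modal axioms $K^i_\sigma$, $Dual_\sigma$ and the generalization rules (UG$_\sigma^i$) play no role, precisely because the definition of local provability already packages a proof-with-hypotheses into a single theorem about an implication; this is the reason the deduction theorem is so cheap in this formulation, in contrast to global deduction where generalization over hypotheses must be controlled.
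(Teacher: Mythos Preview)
Your proof is correct and follows exactly the approach the paper intends: the paper itself does not spell out a proof, merely remarking that it is ``an easy generalization of its analogue from classical logic,'' and what you have written is precisely that classical argument carried out via Definition~\ref{dedhyp} and (a0). Your handling of the edge cases (absent $\vp$, empty contribution from $\Phi_s$) is appropriate and nothing further is needed.
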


Let $\setmod{C}$ be a class of frames or a class of models and define
\begin{center}
$Taut(\setmod{C})=\{Taut(\setmod{C})_s\}_{s\in S}$, 

$Taut(\setmod{C})_s=\{\vp\in Form_s\mid {\mathcal M}\mos{s}\vp \mbox{  for any } {\mathcal M}\in\setmod{C}\}$
\end{center}
 We say that $\logl$ is {\em sound} with respect to $\setmod{C}$ if $\,\,\logl\subseteq Taut(\setmod{C})$.

\begin{proposition}\label{sound}(The soundness of the local deduction) Let $\setmod{C}$ be a class of frames or a class of models such that $\Lambda\subseteq Taut(\setmod{C})$.
If $s\in S$, $\Phi_s\subseteq Form_s$ and $\phi\in Form_s$  then $\Phi_s\vds{s}_{\logl}\phi$ implies $\Phi_s\mos{s}_{\setmod{C}}\phi$.
\end{proposition}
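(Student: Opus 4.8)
The plan is to first upgrade the hypothesis $\Lambda\subseteq Taut(\setmod{C})$ to the full statement $\logl\subseteq Taut(\setmod{C})$ (soundness of theoremhood), and then to reduce local provability from $\Phi_s$ to this via Definition~\ref{dedhyp}.

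For the first part I would show by induction on the length of a $\logl$-proof that $\vds{s}_{\logl}\vp$ implies $\vp\in Taut(\setmod{C})_s$. In the base case I check each axiom schema of $\mathbf K$ is universally true in every $(S,\Sigma)$-model (hence in every model from $\setmod{C}$): schema (a0) holds because at each world the connectives $\neg,\vee$ are interpreted exactly as in classical logic, so a classical tautology of sort $s$ is true at every $w\in W_s$; the schema $(Dual_\sigma)$ is immediate from the definition of $\spt$ together with note~(2) of the excerpt; and $(K^i_\sigma)$ is verified by unwinding note~(2) at a fixed world $w\in W_s$ and reducing, coordinatewise in the $i$-th argument, to the classical validity of $(\phi\ri\chi)\ri(\phi\ri\chi)$. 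For the axioms coming from $\Lambda$ I use that $Taut(\setmod{C})$ is closed under uniform substitution — a substitution instance of $\lambda\in\Lambda_s$ is true at $w$ in $\mathcal M$ iff $\lambda$ is true at $w$ in the model obtained by composing $\rho$ with the substitution — so every such instance again lies in $Taut(\setmod{C})_s\supseteq\Lambda_s$. For the inductive step I verify that the two rules preserve universal truth in a fixed model $\mathcal M$ from $\setmod{C}$: $(MP_s)$ preserves truth at each individual world $w\in W_s$, hence preserves universal truth; and for $(UG_\sigma^i)$, if $\mathcal M,w\mos{s_i}\phi$ for all $w\in W_{s_i}$, then for any $w\in W_s$ and any tuple with $\mathcal R_\sigma ww_1\ldots w_n$ we get $\mathcal M,w_i\mos{s_i}\phi$, so note~(2) yields $\mathcal M,w\mos{s}\spt(\psi_{s_1},\ldots,\phi,\ldots,\psi_{s_n})$, i.e.\ this formula is universally true in $\mathcal M$.

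For the second part, assume $\Phi_s\vds{s}_{\logl}\phi$. By Definition~\ref{dedhyp} there are $\phi_1,\ldots,\phi_n\in\Phi_s$ with $\vds{s}_{\logl}(\phi_1\wedge\cdots\wedge\phi_n)\ri\phi$, so by the first part $(\phi_1\wedge\cdots\wedge\phi_n)\ri\phi\in Taut(\setmod{C})_s$. Now fix any $(S,\Sigma)$-model $\mathcal M$ from $\setmod{C}$ and any $w\in W_s$ with $\mathcal M,w\mos{s}\Phi_s$. Then $\mathcal M,w\mos{s}\phi_i$ for each $i$, hence $\mathcal M,w\mos{s}\phi_1\wedge\cdots\wedge\phi_n$ by the satisfaction clause for $\wedge$; since $\mathcal M\mos{s}(\phi_1\wedge\cdots\wedge\phi_n)\ri\phi$ we have in particular $\mathcal M,w\mos{s}(\phi_1\wedge\cdots\wedge\phi_n)\ri\phi$, and the clauses for $\neg,\vee$ (i.e.\ for $\ri$) give $\mathcal M,w\mos{s}\phi$. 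As $\mathcal M$ and $w$ were arbitrary, $\Phi_s\mos{s}_{\setmod{C}}\phi$.

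The only genuinely non-routine point is the first part, and within it the two modal items: checking that $(K^i_\sigma)$ is valid on all frames and that $(UG_\sigma^i)$ preserves universal truth; both require care in unwinding the semantics of the dual operator $\spt$ from note~(2) and in tracking the distinguished coordinate $i$ among the arguments of $\sigma$. The closure of $Taut(\setmod{C})$ under uniform substitution should also be stated precisely, but it is the standard argument; everything else is a direct transcription of the classical case.
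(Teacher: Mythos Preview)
Your approach is essentially the paper's, only more explicit: the paper's proof verifies just that $(UG_\sigma^i)$ preserves universal truth, by exactly the argument you give, and leaves the axioms, $(MP_s)$, and the reduction via Definition~\ref{dedhyp} as understood. One point to watch: your claim that $Taut(\setmod{C})$ is closed under uniform substitution is justified when $\setmod{C}$ is a class of \emph{frames} (the model obtained by precomposing the valuation with the substitution is still from $\setmod{C}$), but the argument does not go through verbatim when $\setmod{C}$ is a fixed class of \emph{models}; the paper's proof does not address this case either, so your treatment is at least as complete as the original.
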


\begin{proof}
Let  $\mathcal{M}= (\mathcal{W}, \lbrace\mathcal{R}_{\srb}\rbrace_{\srb \in \Sigma}, V)$ be a model from $\setmod{C}$. 
We only prove that the generalization rule ($UG_\sigma^i$) is sound.
To prove this, we assume that $\srb\in \Sigma_{s_1\cdots s_n,s}$, $i\in [n]$ and
$\phi\in Form_{s_i}$ such that $\mathcal{M}, u \mos{s_i}_{\setmod{C}}\phi$  for any $u \in W_{s_i}$. Hence for  any 
$w\in W_s$ and \mbox{$(u_1,\ldots,u_n) \in W_{s_1}\times\cdots\times W_{s_n}$} such that $\mathcal{R}_{\srb} wu_1\ldots u_n$  we have $\mathcal{M},u_i \mos{s_i}_{\setmod{C}} \phi$, which means that \mbox{$\mathcal{M}, w\mos{s}_{\setmod{C}} \spt (\psi_{1}, \ldots ,\phi, \ldots \psi_{n})$} for any $\psi_1,\ldots, \psi_{i-1},\psi_{i+1},\ldots,\psi_n$ of appropriate sorts.
\end{proof} 

As a corollary we get: {\em the $(S,\Sigma)$-polyadic normal modal logic $\mathbf{K}$ is sound with respect to the class of all $(S,\Sigma)$-frames}.

\subsection{Canonical model. Completeness.}
Following closely the approach from \cite{mod}, in order to define the canonical models and to prove the completeness theorem, we need to study the consistent sets.

For any $s\in S$, we say that the set $\Phi_s \subseteq Form_s$ is ({\em locally}) $\logl$-{\em inconsistent}  if $\Phi_s\vds{s}_{\logl}\bot_s$ and it is ({\em locally})
$\logl$-{\em consistent} otherwise.

In the sequel by {\em consistent} we mean {\em locally consistent}.
{We analyze the  {\em global consistency} in Section \ref{secglobal}.} 

\begin{remark}
As in classical logic, one can easily prove the following.
\begin{itemize}
\item[(1)] A set  $\Phi_s \cup \{\vp\}_s \subseteq Form_s$ is $\logl$-inconsistent if and only if  $\Phi_s \vds{s}_{\logl} \neg \vp $.

\item[(2)] Assume $\logl$ is sound with respect to $\setmod{C}$. We further assume  $s\in S$ and  $\Phi_s\subseteq Form_s$  such that $\Phi_s$ is satisfiable on some model from $\setmod{C}$. Then $\Phi_s$ is $\logl$-consistent.
\end{itemize} 
\end{remark}

In the sequel we assume $\setmod{C}$ is a class of frames or a class of models.  

\medskip

We say that $\logl$ is {\em  complete} with respect to $\setmod{C}$ if $\,\, Taut(\setmod{C})\subseteq \logl$.  We say that 
$\logl$ is {\em strongly complete} with respect to $\setmod{C}$ if
\begin{center}
$\Phi_s\mos{s}_{\setmod{C}}\vp$  { implies } $\Phi_s\vds{s}_{\logl}\vp$ for any $s\in S$ and $\Phi_s\cup\{\vp\}\subseteq Form_s$.
\end{center}

\medskip

\noindent{The next result is a straightforward generalization of \cite[Proposition 4.12]{mod}.}

\begin{proposition}\label{helpcomp} The following are equivalent:\\
(i)  $\logl$ is strongly complete with respect to $\setmod{C}$,\\
(ii) any $\logl$-consistent set of formulas is satisfiable on some
model from $\setmod{C}$.
\end{proposition}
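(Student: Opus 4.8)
The plan is to prove the two implications of the equivalence between strong completeness and the satisfiability of consistent sets, following the standard Henkin-style argument adapted to the many-sorted setting.

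\medskip

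\noindent\textbf{(i)~$\Rightarrow$~(ii).} I would argue by contraposition. Suppose some $\mathbf{K}\Lambda$-consistent set $\Phi_s\subseteq Form_s$ is \emph{not} satisfiable on any model from $\setmod{C}$. Then for any model $\mathcal{M}$ from $\setmod{C}$ and any world $w\in W_s$, $\mathcal{M},w\mos{s}\Phi_s$ fails; in particular the conditional $\mathcal{M},w\mos{s}\Phi_s$ implies $\mathcal{M},w\mos{s}\bot_s$ holds vacuously, so $\Phi_s\mos{s}_{\setmod{C}}\bot_s$. By strong completeness this gives $\Phi_s\vds{s}_{\logl}\bot_s$, i.e.\ $\Phi_s$ is $\mathbf{K}\Lambda$-inconsistent, a contradiction. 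This direction is short and requires only unwinding the definitions of semantic consequence, satisfiability, and inconsistency at the fixed sort $s$.

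\medskip

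\noindent\textbf{(ii)~$\Rightarrow$~(i).} This is the substantive direction. Fix $s\in S$ and assume $\Phi_s\mos{s}_{\setmod{C}}\vp$; I want $\Phi_s\vds{s}_{\logl}\vp$. Arguing by contradiction, suppose $\Phi_s\not\vds{s}_{\logl}\vp$. Then, using Remark after Definition~\ref{dedhyp} (item (1) of the Remark on consistent sets, the characterization $\Phi_s\cup\{\chi\}_s$ is inconsistent iff $\Phi_s\vds{s}_{\logl}\neg\chi$), the set $\Phi_s\cup\{\neg\vp\}_s$ is $\mathbf{K}\Lambda$-consistent: indeed if it were inconsistent we would get $\Phi_s\vds{s}_{\logl}\neg\neg\vp$, hence $\Phi_s\vds{s}_{\logl}\vp$ by (a0) and (MP$_s$). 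Now invoke hypothesis (ii): $\Phi_s\cup\{\neg\vp\}_s$ is satisfiable on some model $\mathcal{M}$ from $\setmod{C}$, say $\mathcal{M},w\mos{s}\Phi_s\cup\{\neg\vp\}_s$ for some $w\in W_s$. Then $\mathcal{M},w\mos{s}\Phi_s$ while $\mathcal{M},w\mos{s}\neg\vp$, so $\mathcal{M},w\not\mos{s}\vp$, contradicting $\Phi_s\mos{s}_{\setmod{C}}\vp$. Hence $\Phi_s\vds{s}_{\logl}\vp$, as required.

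\medskip

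\noindent The main obstacle is not conceptual but bookkeeping: one must make sure that the two lemmas being used --- the deduction-style fact that $\Phi_s\cup\{\chi\}_s$ is inconsistent iff $\Phi_s\vds{s}_{\logl}\neg\chi$, and the equivalence between failing to satisfy $\vp$ and satisfying $\neg\vp$ at a world --- are genuinely available in the many-sorted framework. Both are, since the local deduction operates entirely within the single sort $s$ (all hypotheses and the conclusion share sort $s$), and the clause for $\neg$ in the satisfaction relation is pointwise at each sort; so the argument is a verbatim transcription of the mono-sorted proof of \cite[Proposition 4.12]{mod} with every assertion tagged by the ambient sort $s$. The only subtlety worth a sentence in the write-up is that ``satisfiable'' here means satisfiable \emph{at a single world of sort $s$}, which is exactly what the definition of $\mos{s}$-satisfiability provides and exactly what local semantic consequence at sort $s$ refers to, so the quantifier over worlds matches up on both sides.
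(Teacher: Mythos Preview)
Your proof is correct and follows exactly the standard argument the paper alludes to: the paper itself does not spell out a proof but simply cites \cite[Proposition 4.12]{mod}, and your two contraposition arguments are precisely the many-sorted transcription of that result. There is nothing to add.
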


As usual, the {\em maximal consistent} sets of formulas are a main ingredient in defining the canonical model. In the local approach maximality is defined within a particular sort $s\in S$: a set of formulas $\Phi_s\subseteq Form_s$ is {\em maximal} $\logl${\em -consistent} if it is a maximal element in the set of all $\logl$-consistent sets of formulas of sort $s$ ordered by inclusion.

\begin{remark}\label{apartenmcs1}
In the mono-sorted setting, any maximal consistent set is closed to deduction. The same happens in our many-sorted approach: if $\Phi_s \subseteq Form_s$ is a maximal $\logl$-consistent set then 
\begin{center}
$\Phi_s  \vds{s}_{\logl} \vp$ iff $\vp \in \Phi_s$.
\end{center}
\end{remark}

The well-known properties of the maximal consistent sets hold in our setting, as well as the Lindenbaum's Lemma. For any sort $s\in S$, the proof is similar with the proof for the classical propositional logic, therefore we state them without proofs.

\begin{lemma}\label{16} 
If $s\in S$ and   $\Phi_s \subseteq Form_s$ is maximal $\logl$-consistent then the following properties hold for any $\vp, \psi \in Form_s$:
\begin{itemize}
\item[(i)] \label{16i}if $\vp$, $\vp \ri \psi \in \Phi_s$, then $\psi \in \Phi_s$
\item[(ii)] $\logl_s\subseteq \Phi_s$
\item[(iii)] \label{16iii}$\vp \in \Phi_s$ or $\neg \vp \in \Phi_s$
\item[(iv)] \label{16iiii}$\vp \wedge \psi \in \Phi_s$ if and only if  $\vp \in \Phi_s$ and   $\psi \in \Phi_s$.

\end{itemize}
\end{lemma}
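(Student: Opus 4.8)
The plan is to prove Lemma~\ref{16} by the standard argument for maximal consistent sets, adapted sort by sort. Since maximality in the local setting is confined to a single sort $s$, every set and formula appearing in the proof lives in $Form_s$, and the deductive machinery we need --- the local deduction theorem (Theorem~\ref{locdedth}), Remark~\ref{apartenmcs1} (deductive closure of maximal consistent sets), and the fact that all classical propositional theorems lie in $\logl_s$ --- has already been established for each fixed sort. So the proof is essentially the classical one, carried out inside $Form_s$.

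First, for (i) I would argue as follows: if $\vp \in \Phi_s$ and $\vp\ri\psi\in\Phi_s$, then $\Phi_s\vds{s}_{\logl}\psi$ by taking $\phi_1=\vp$, $\phi_2 = \vp\ri\psi$ in Definition~\ref{dedhyp} together with the classical theorem $\vds{s}_{\logl}(\vp\wedge(\vp\ri\psi))\ri\psi$; then $\psi\in\Phi_s$ by Remark~\ref{apartenmcs1}. For (ii), $\logl_s\subseteq\Phi_s$ is immediate: every $\vp\in\logl_s$ is a theorem, so $\Phi_s\vds{s}_{\logl}\vp$ trivially, hence $\vp\in\Phi_s$ again by Remark~\ref{apartenmcs1}. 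For (iii), suppose neither $\vp$ nor $\neg\vp$ is in $\Phi_s$. By maximality, $\Phi_s\cup\{\vp\}_s$ and $\Phi_s\cup\{\neg\vp\}_s$ are both $\logl$-inconsistent, so by part~(1) of the preceding Remark we get $\Phi_s\vds{s}_{\logl}\neg\vp$ and $\Phi_s\vds{s}_{\logl}\vp$; combining these via a classical theorem yields $\Phi_s\vds{s}_{\logl}\bot_s$, contradicting consistency of $\Phi_s$. (One should also note $\vp$ and $\neg\vp$ cannot both lie in $\Phi_s$, since then $\Phi_s\vds{s}_{\logl}\bot_s$ using $\vds{s}_{\logl}(\vp\wedge\neg\vp)\ri\bot_s$.) Finally, for (iv): if $\vp\wedge\psi\in\Phi_s$ then from the classical theorems $(\vp\wedge\psi)\ri\vp$ and $(\vp\wedge\psi)\ri\psi$ (which are in $\logl_s$) and part~(i) we get $\vp,\psi\in\Phi_s$; conversely, if $\vp,\psi\in\Phi_s$ then $\Phi_s\vds{s}_{\logl}\vp\wedge\psi$ by Definition~\ref{dedhyp} with the classical tautology $(\vp\wedge\psi)\ri(\vp\wedge\psi)$, so $\vp\wedge\psi\in\Phi_s$.

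I do not expect any genuine obstacle here; the statement is flagged in the excerpt as provable "similar with the proof for the classical propositional logic." The only point that requires the many-sorted machinery rather than being purely classical is the use of maximality in~(iii), and even there the key subtlety --- that maximality is taken within the fixed sort $s$, so adjoining $\vp$ or $\neg\vp$ keeps us inside $Form_s$ --- is exactly what the definition of local maximal consistency was set up to guarantee. Everything else is an instance of "for any theorem of classical logic there is a corresponding one in $Form_s$," as observed at the start of Section~\ref{mainlogded}. Consequently the authors (and I) would state it without proof.
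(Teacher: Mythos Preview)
Your proposal is correct and matches the paper's treatment exactly: the paper states the lemma without proof, noting that ``the proof is similar with the proof for the classical propositional logic,'' and your outline supplies precisely that standard argument carried out within the fixed sort $s$. You even anticipate this correctly in your final paragraph.
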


\begin{lemma}(Lindenbaum's Lemma)
If $s\in S$ and $\Phi_s\subseteq Form_s$ is a $\logl$-consistent set of formulas then there is a maximal $\logl$-consistent set $\Phi_s^+$ such that $\Phi_s \subseteq \Phi_s^+$.
\end{lemma}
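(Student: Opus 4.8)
The plan is to mimic the classical construction exactly, since by the setup remarks everything takes place \emph{within a single sort} $s\in S$: we fix $s$, work entirely inside $Form_s$, and use only the sort-$s$ instances of the deductive machinery. First I would observe that $Form_s$ is a countable set (it is built inductively from the countable set $P_s$ and finitely many formation rules -- or, if one does not want to assume countability of $P_s$, one enumerates by transfinite recursion or applies Zorn's Lemma instead; I will present the enumeration version as it is the most transparent). So fix an enumeration $\varphi_0,\varphi_1,\varphi_2,\ldots$ of $Form_s$.

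Next I would build an increasing chain of $\logl$-consistent subsets of $Form_s$. Set $\Psi_0:=\Phi_s$, which is $\logl$-consistent by hypothesis. Given $\Psi_n$ $\logl$-consistent, define
\[
\Psi_{n+1}:=\begin{cases}\Psi_n\cup\{\varphi_n\}&\text{if this set is }\logl\text{-consistent},\\[2pt]\Psi_n&\text{otherwise.}\end{cases}
\]
Each $\Psi_{n+1}$ is $\logl$-consistent by construction, and the chain is increasing. Then put $\Phi_s^+:=\bigcup_{n\geq 0}\Psi_n$; clearly $\Phi_s\subseteq\Phi_s^+\subseteq Form_s$.

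The two things that remain to check are that $\Phi_s^+$ is $\logl$-consistent and that it is maximal among $\logl$-consistent subsets of $Form_s$. For consistency: if $\Phi_s^+$ were inconsistent, then by Definition~\ref{dedhyp} there would be finitely many $\psi_1,\ldots,\psi_k\in\Phi_s^+$ with $\vds{s}_{\logl}(\psi_1\wedge\cdots\wedge\psi_k)\ri\bot_s$; since the chain is increasing and these are finitely many formulas, all $\psi_j$ lie in some single $\Psi_N$, making $\Psi_N$ $\logl$-inconsistent -- a contradiction. This finite-character argument is the only real point of the proof, and it is exactly where the definition of local provability ``from $\Phi_s$'' (as provability of a single implication from a finite conjunction of hypotheses) is used. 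For maximality: suppose $\Psi$ is $\logl$-consistent with $\Phi_s^+\subseteq\Psi\subseteq Form_s$, and let $\varphi\in\Psi$. Then $\varphi=\varphi_n$ for some $n$. If $\varphi_n\notin\Phi_s^+$, then in particular $\varphi_n\notin\Psi_{n+1}$, so by construction $\Psi_n\cup\{\varphi_n\}$ was $\logl$-inconsistent; but $\Psi_n\cup\{\varphi_n\}\subseteq\Psi$ and any superset of an inconsistent set is inconsistent, contradicting consistency of $\Psi$. Hence $\varphi_n\in\Phi_s^+$, so $\Psi=\Phi_s^+$, proving maximality. No step here is a genuine obstacle; the main thing to get right is that ``consistent'' has finite character in this sorted setting, which it does because $\Phi_s\vds{s}_{\logl}\bot_s$ unwinds to a statement about finitely many hypotheses of sort $s$.
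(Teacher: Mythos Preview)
Your proposal is correct and is precisely the classical Lindenbaum construction the paper has in mind: the paper does not actually spell out a proof here, noting only that ``for any sort $s\in S$, the proof is similar with the proof for the classical propositional logic,'' and your enumeration-and-chain argument (with the finite-character step grounded in Definition~\ref{dedhyp}) is exactly that proof carried out inside the fixed sort $s$. Your caveat about countability of $P_s$ and the Zorn's Lemma alternative is a nice touch the paper omits.
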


 In  order to prove that any consistent set is satisfiable, we define the {\em canonical model}.  Recall that $(S,\Sigma)$ is a fixed many-sorted signature, $P$ is an $S$-sorted set of propositional variables and our logic is $\logl$.

\begin{definition}\label{18}
The {\em canonical model} is $\mathcal{M}_{\logl}=(\mathcal{W}^{\logl}, \lbrace\mathcal{R}^{\logl}_{\srb}\rbrace_{{\srb} \in \Sigma}, V^{\logl})$ defined as follows:
\begin{itemize}
\item[(1)] for any $s\in S$,  $\mathcal{W}^{\logl}_s=\{ \Phi \subseteq Form_s \mid \Phi \mbox{ is maximal $\logl$-consistent} \}  $,
\item [(2)]\label{18ii} for any  $\srb \in \Sigma_{s_1 \ldots s_n,s}, w\in W^{\logl}_s, u_1\in W^{\logl}_{s_1},\ldots, u_n\in W^{\logl}_{s_n}$ we define
\begin{center}
$\mathcal{R}^{\logl}_{\srb} wu_1\ldots u_n$ iff $(\psi_1, \ldots, \psi_n)\in u_1\times\cdots\times u_n$ implies $\srb(\psi_1,\ldots , \psi_n)\in w$
\end{center}

\item[(3)] \label{18iii} $V^{\logl}= \lbrace V^{\logl}_s\rbrace_{s\in S}$ is the valuation defined by 
\begin{center}
$V^{\logl}_s(p) = \lbrace w \in W^{\logl}_s | p\in w \rbrace$ for any $s\in S$ and $p \in P_s$.
\end{center}
\end{itemize}

\end{definition}

\begin{lemma}\label{extruth}
If $s\in S$, $\phi\in Form_s$, $\sigma\in \Sigma_{s_1\cdots s_n,s}$ and $w\in W^{{\logl}}_s$ then the following hold:
\begin{itemize}

\item[(i)] $\mathcal{R}^{\logl}_{\srb} wu_1\ldots u_n$ iff for any  formulas $\psi_1,\ldots \psi_n$, $ \spt(\psi_1, \ldots, \psi_n) \in w$ implies $\psi_i \in u_i$ for some $i \in  [n] $.
\item[(ii)] If  $\tr( \psi_1, \ldots,\psi_n) \in w$ then for any $i \in  [n] $ there  is  $u_i \in W^{{\logl}}_{s_i}$  such that  
$\psi_1 \in u_1$, $\ldots$, $\psi_n \in u_n$ and $\mathcal{R}^{{\logl}}_\tr wu_1\ldots u_n$. 
\item[(iii)] $\mathcal{M}^{\logl}, w \mos{s} \phi$ if and only if $\phi \in w$.
\end{itemize}
\end{lemma}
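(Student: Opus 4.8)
The plan is to prove the three statements essentially in the order given, since (i) is a routine dualization of the definition of $\mathcal{R}^{\logl}_\srb$, (ii) is an existence (Lindenbaum-style) argument that will be the technical heart of the lemma, and (iii) is the Truth Lemma, proved by induction on $\phi$ with the modal case split into the two inclusions handled by (i) and (ii) respectively.

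For (i), I would simply unfold the definition in Definition \ref{18}(2) and contrapose. Using $Dual_\sigma$ (which lies in every maximal consistent set by Lemma \ref{16}(ii)) together with Lemma \ref{16}(iii), membership of $\srb(\psi_1,\dots,\psi_n)$ in a world is equivalent to non-membership of $\spt(\neg\psi_1,\dots,\neg\psi_n)$; rewriting the condition ``$(\psi_1,\dots,\psi_n)\in u_1\times\cdots\times u_n \Rightarrow \srb(\psi_1,\dots,\psi_n)\in w$'' via this equivalence and replacing each $\psi_j$ by $\neg\psi_j$ (legitimate since each $u_j$ is maximal, hence contains exactly one of $\psi_j,\neg\psi_j$) yields the stated form. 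This is bookkeeping with no real obstacle.

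For (ii) — the part I expect to be the main obstacle — fix $\sigma\in\Sigma_{s_1\cdots s_n,s}$ and suppose $\sigma(\psi_1,\dots,\psi_n)\in w$. I want maximal consistent sets $u_i\ni\psi_i$ with $\mathcal{R}^{\logl}_\sigma wu_1\dots u_n$. The natural candidate is to take, for each $i$, a maximal consistent extension $u_i$ of a suitable set, but the sets must be built \emph{simultaneously} because the relational condition couples them. Following the mono-sorted pattern from \cite{mod}, I would first show the set $\{\psi_1\}\cup\cdots$ cannot be chosen independently; instead one enumerates $Form_{s_1}\times\cdots\times Form_{s_n}$ componentwise (or, more cleanly, enumerates each $Form_{s_i}$) and builds an increasing chain of tuples $(\Delta_1^k,\dots,\Delta_n^k)$ with $\Delta_i^0=\{\psi_i\}$, maintaining the invariant that for every tuple of conjunctions $\chi_i$ of finitely many members of $\Delta_i^k$ one has $\sigma(\chi_1,\dots,\chi_n)\in w$ (equivalently, by (i), the tuple is ``$\sigma$-coherent over $w$''). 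At each stage, for the active coordinate $i$ and active formula $\theta$, one uses Proposition \ref{genprop}(iii) (the $\srb$-distributes-over-$\vee$ theorem) — precisely, $\srb(\dots,\chi_i\wedge(\theta\vee\neg\theta),\dots)\leftrightarrow\srb(\dots,\chi_i\wedge\theta,\dots)\vee\srb(\dots,\chi_i\wedge\neg\theta,\dots)$ is a $\logl$-theorem, so since the left side is in $w$, one of the two disjuncts is (Lemma \ref{16}(iii) again), letting us add $\theta$ or $\neg\theta$ to $\Delta_i$ while preserving the invariant. Taking unions $u_i=\bigcup_k\Delta_i^k$ gives maximal sets (every formula or its negation was decided), each consistent (a conjunction of its members yielding $\bot_{s_i}$ would, via {\bf UG}$_\sigma^i$ applied $n$ times / Proposition \ref{genprop}(i), force $\sigma(\bot,\dots,\bot)$ or similar into $w$, contradicting consistency of $w$ — here one also uses that $\sigma(\dots,\bot_{s_i},\dots)\ri\bot_s$ is derivable, which follows from Remark \ref{alt} or directly), with $\psi_i\in u_i$, and by construction $\mathcal{R}^{\logl}_\sigma wu_1\dots u_n$ by (i). The delicate points are getting the invariant exactly right so that it survives a step in \emph{one} coordinate while the others are frozen, and handling the consistency of the limit sets; the many-sorted aspect adds nothing essential here beyond carrying sort subscripts.

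For (iii), induct on the structure of $\phi\in Form_s$. The base case $\phi=p\in P_s$ is the definition of $V^{\logl}_s$; the cases $\neg\psi$ and $\psi_1\vee\psi_2$ use Lemma \ref{16}(iii) and the definition of $\vee$ together with Lemma \ref{16}(iv) / maximality exactly as in the classical propositional completeness proof. For $\phi=\srb(\phi_1,\dots,\phi_n)$ with $\srb\in\Sigma_{s_1\cdots s_n,s}$: if $\mathcal{M}^{\logl},w\mos{s}\srb(\phi_1,\dots,\phi_n)$ then there are $u_i\in W^{\logl}_{s_i}$ with $\mathcal{R}^{\logl}_\srb wu_1\dots u_n$ and $\mathcal{M}^{\logl},u_i\mos{s_i}\phi_i$; by the induction hypothesis $\phi_i\in u_i$ for each $i$, so by Definition \ref{18}(2) $\srb(\phi_1,\dots,\phi_n)\in w$. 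Conversely, if $\srb(\phi_1,\dots,\phi_n)\in w$, apply part (ii) with $\psi_i=\phi_i$ to obtain $u_i\ni\phi_i$ with $\mathcal{R}^{\logl}_\srb wu_1\dots u_n$; the induction hypothesis gives $\mathcal{M}^{\logl},u_i\mos{s_i}\phi_i$, hence $\mathcal{M}^{\logl},w\mos{s}\srb(\phi_1,\dots,\phi_n)$. For the nullary case $\srb\in\Sigma_{\lambda,s}$, note $\srb\in w$ iff $w\in\mathcal{R}^{\logl}_\srb$ (the empty-tuple instance of the definition), which matches observation (1) after the satisfaction definition. This completes the induction and the proof.
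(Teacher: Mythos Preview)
Your proposal is correct and follows essentially the same route as the paper: part (i) is proved there by the same duality/contraposition argument, part (ii) is deferred entirely to \cite[Lemma 4.26]{mod} (the simultaneous Lindenbaum construction you spelled out is precisely that proof), and part (iii) is the same structural induction using (ii) for the backward direction of the $\srb$-case. Your explicit treatment of the nullary case is a small addition the paper leaves implicit.
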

\begin{proof} Note that (ii) from the above result is the analogue of the  {\em Existence Lemma} and (iii) is the analogue of the  {\em Truth Lemma} from \cite[Chapter 4.2]{mod}.

Let  $s\in S$, $\phi\in Form_s$, $\sigma\in \Sigma_{s_1\cdots s_n,s}$ and $w\in W^{{\logl}}_s$.
\begin{itemize}
\item[(i)] Assume that $\srb \in \si{s}{n}$ and let $\psi_1,\ldots,\psi_n$ be formulas of sorts $s_1,\ldots, s_n$, respectively.
 Suppose that $\mathcal{R}^{\logl}_{\srb} wu_1\ldots u_n$ and $ \spt(\psi_1, \ldots, \psi_n) \in w$. Assume that $\psi_i \not  \in u_i$ for all $i \in  [n] $.
Note that $u_i$ is a maximal consistent set of sort $s_i$ for any $i \in  [n] $.  By Lemma \ref{16}  we get $\neg \psi_i \in u_i$ for all $i \in  [n] $, which means that $\srb (\neg \psi_1, \ldots, \neg \psi_n)\in w$. Using again Lemma \ref{16} it follows that $\neg \srb (\neg \psi_1, \ldots, \neg \psi_n)\not \in w$, so $\spt ( \psi_1, \ldots,\psi_n)\not \in w$, contradiction with the hypothesis.

For the converse implication,  let \mbox{$(\psi_1, \ldots, \psi_n)\in u_1\times\cdots\times u_n$}, and assume that $\srb(\psi_1,\ldots , \psi_n)\not\in w$. Using Lemma \ref{16}, we infer that $\neg \srb(\psi_1,\ldots , \psi_n) \in w$, so $\spt(\neg\psi_1,\ldots , \neg\psi_n)\in w$. From hypothesis, there is $i\in  [n] $ such that $\neg \psi_i \in u_i$. Hence, for some $i\in  [n] $, $\psi_i \in u_i$ and $\neg \psi_i \in u_i$, which contradicts the fact that $u_i$ is consistent. Consequently, \mbox{$(\psi_1, \ldots, \psi_n)\in u_1\times\cdots\times u_n$} implies  $\srb(\psi_1,\ldots , \psi_n)\in w$, so $\mathcal{R}^{\logl}_{\srb} wu_1\ldots u_n$.

\item[(ii)] The proof is similar with \cite[Lemma 4.26]{mod}.
\item[(iii)]We make the proof by structural induction on $\phi$.
\begin{itemize}
\item[-)] $\mathcal{M}^{\logl}, w \mos{s} p$ iff  $w\in V^{\logl}_s(p)$ iff $p\in w$;

\item[-)]  $\mathcal{M}^{\logl},w \mos{s} \neg \phi$, if and only if $\mathcal{M}^{\logl},w \not\mos{s}\phi$ iff $\phi \not\in w$ (inductive hypothesis) iff $\neg \phi \in w $ (maximal ${\logl}$-consistent set Proposition \ref{16});
\item[-)]  $\mathcal{M}^{\logl},w \mos{s} \phi \vee \psi$ iff $\mathcal{M}^{\logl},w \mos{s} \phi$ or $\mathcal{M}^{\logl},w \mos{s} \psi$  iff $\phi \in w$ or $\psi \in w$ (inductive hypothesis) iff $\phi \vee \psi \in w$;
\item[-)] let $\tr \in \Sigma_{s_1 \ldots s_n,s}$ and $\phi=\tr (\phi_1, \ldots , \phi_n )$;
then  $\mathcal{M}^{\logl}, w \mos{s} \tr (\phi_1, \ldots , \phi_n )$ if and only if for any $i \in  [n] $ there exists $w_i \in W^{\logl}{s_i}$ such that $\mathcal{M}^{\logl},w_i  \mos{s_i} \phi_i$ and  $\mathcal{R}^{\logl}_{\tr} ww_1\ldots w_n$. Using the induction hypothesis we get $\phi_i\in w_i$ for any  $i \in  [n] $. Since $\mathcal{R}^{\logl}_{\tr} ww_1\ldots w_n$,  by definition we infer that $\phi\in w$. Conversly, suppose $\tr (\phi_1, \ldots , \phi_n ) \in w$. Using (ii),  for any $i \in  [n] $ there  is  $u_i \in W^{\logl}_{s_i}$  such that  
$\phi_1 \in u_1$, $\ldots$, $\phi_n \in u_n$ and $\mathcal{R}^{\logl}_\tr wu_1\ldots u_n$.  Using the induction hypothesis we get  $\mathcal{M}^{\logl}, u_i \mos{s} \phi_i$ for any $i \in  [n] $, so $\mathcal{M}^{\logl}, w\mos{s} \phi$. 
\end{itemize}
\end{itemize}
\end{proof}

\medskip

We  can now prove that the  local deduction is complete.

\begin{theorem}(Canonical model theorem)\label{corhelp}
For any $s\in S$, if $\Phi\subseteq Form_s$ is $\logl$-consistent then it is satisfiable over the canonical model $\mathcal{M}^{\logl}$. 
\end{theorem}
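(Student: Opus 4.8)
The plan is to reduce everything to the Truth Lemma (Lemma~\ref{extruth}(iii)) by way of Lindenbaum's Lemma, exactly as in the mono-sorted case of \cite{mod}.

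Before starting, I would observe that $\mathcal{M}^{\logl}$ is a genuine $(S,\Sigma)$-model in the sense of Definition~\ref{18}: the relations $\mathcal{R}^{\logl}_{\srb}$ and the valuation $V^{\logl}$ are given outright there, so the only thing to check is that each $W^{\logl}_{s'}$ is nonempty. For the sort $s$ named in the statement this is immediate once $\Phi$ has been extended below; for the remaining sorts it holds whenever $\bot_{s'}$ is not a $\logl$-theorem, i.e.\ whenever $\logl$ is consistent at sort $s'$ (the natural standing assumption, automatically true whenever $\logl$ is sound with respect to some nonempty class of frames, as $\mathbf{K}$ is). This is the only mildly delicate point; everything else is bookkeeping.

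Then, given a $\logl$-consistent set $\Phi\subseteq Form_s$, I would apply Lindenbaum's Lemma to obtain a maximal $\logl$-consistent set $\Phi^+\subseteq Form_s$ with $\Phi\subseteq\Phi^+$; by Definition~\ref{18}(1) this $\Phi^+$ is a world of $\mathcal{W}^{\logl}_s$. I would then invoke the Truth Lemma (Lemma~\ref{extruth}(iii)) at the world $\Phi^+$: for every $\vp\in\Phi$ we have $\vp\in\Phi^+$, hence $\mathcal{M}^{\logl},\Phi^+\mos{s}\vp$, and therefore $\mathcal{M}^{\logl},\Phi^+\mos{s}\Phi$, which is precisely what it means for $\Phi$ to be satisfiable over the canonical model.

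I do not expect a real obstacle here: all the substantive work — in particular the Existence Lemma (Lemma~\ref{extruth}(ii)) and the modal case of the Truth Lemma, which rest on the characterisations of $\mathcal{R}^{\logl}_{\srb}$ in Lemma~\ref{extruth}(i) and on the maximal-consistent-set properties of Lemma~\ref{16} — has already been carried out. The one thing to stay alert to is the sort discipline: a world of sort $s$ contains only formulas of sort $s$, so Lemma~\ref{extruth}(iii) is applied at the correct sort and the witnesses it produces in the modal clause lie in the correct components $W^{\logl}_{s_i}$; this is forced by the inductive definitions of $Form_S$ and of $\mathcal{W}^{\logl}$.
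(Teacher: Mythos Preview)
Your proposal is correct and follows essentially the same route as the paper: extend $\Phi$ to a maximal $\logl$-consistent set via Lindenbaum's Lemma, then invoke the Truth Lemma (Lemma~\ref{extruth}(iii)) at that world. Your additional remarks on nonemptiness of $W^{\logl}_{s'}$ for all sorts and on sort discipline are welcome clarifications that the paper's two-line proof leaves implicit.
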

\begin{proof}
Let $s\in S$,  $\Phi\subseteq Form_s$ a $\logl$-consistent set and $w\subseteq Form_s$ a maximal $\logl$-consistent set such that $\Phi\subseteq w$. Hence  $\mathcal{M}^{\logl}, w \mos{s}_{\logl} \phi$ for any $\phi \in \Phi$, so $\Phi$ is satisfiable over the canonical model. 
\end{proof}

The above result asserts that, for any $\Lambda\subseteq Form$ the normal modal logic $\logl$ is complete with respect to the canonical model.

\begin{theorem}(Completeness of $\mathbf K$)
The $(S,\Sigma)$-polyadic normal modal logic $\mathbf K$ is strongly complete with respect to the class of all $(S,\Sigma)$-frames, i.e. 
for any $s\in S$, $\phi\in Form_s$ and $\Phi_s\subseteq Form_s$, 

\begin{center}
 $\Phi_s\vds{s}_{\mathbf K} \phi$ if and only if $\Phi_s \mos{s} \phi$.
\end{center}
\end{theorem}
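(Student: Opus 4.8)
The plan is to derive this theorem as the specialization of the general soundness--completeness apparatus developed above to the case $\Lambda=\emptyset$ (so that $\mathbf K\Lambda=\mathbf K$) and to $\setmod{C}$ the class of all $(S,\Sigma)$-frames.

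For the implication $\Phi_s\vds{s}_{\mathbf K}\phi\Rightarrow\Phi_s\mos{s}\phi$ I would simply invoke Proposition \ref{sound}. Taking $\setmod{C}$ to be the class of all $(S,\Sigma)$-frames, the hypothesis $\Lambda\subseteq Taut(\setmod{C})$ holds vacuously, and the proposition gives $\Phi_s\vds{s}_{\mathbf K}\phi\Rightarrow\Phi_s\mos{s}_{\setmod{C}}\phi$, which is the desired $\Phi_s\mos{s}\phi$. This is precisely the corollary already recorded right after the proof of Proposition \ref{sound}.

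The content is in the converse, i.e. strong completeness: $\Phi_s\mos{s}\phi\Rightarrow\Phi_s\vds{s}_{\mathbf K}\phi$. Here I would route through Proposition \ref{helpcomp}, so it suffices to show that every $\mathbf K$-consistent set of formulas is satisfiable on some model from the class of all $(S,\Sigma)$-frames. Theorem \ref{corhelp} already provides satisfiability on the canonical model $\mathcal M^{\mathbf K}$ (through the Truth Lemma, Lemma \ref{extruth}(iii)); what remains is to observe that $\mathcal M^{\mathbf K}$ genuinely belongs to $\setmod{C}$, i.e. that $(\mathcal W^{\mathbf K},\{\mathcal R^{\mathbf K}_\sigma\}_{\sigma\in\Sigma})$ is an $(S,\Sigma)$-frame in the sense of the frame definition. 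The relations $\mathcal R^{\mathbf K}_\sigma$ carry the correct profiles by construction, so the only thing needing an argument is that $W^{\mathbf K}_s\neq\emptyset$ for every $s\in S$. This holds because $\emptyset\subseteq Form_s$ is $\mathbf K$-consistent --- equivalently, $\bot_s$ is not a theorem of $\mathbf K$, which is immediate from the soundness direction since $\bot_s$ is false at every world of every model --- whence Lindenbaum's Lemma extends $\emptyset$ to a maximal $\mathbf K$-consistent set of sort $s$, witnessing $W^{\mathbf K}_s\neq\emptyset$. Once $\mathcal M^{\mathbf K}$ is recognized as a member of $\setmod{C}$, Proposition \ref{helpcomp} yields strong completeness.

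I do not anticipate a real obstacle: the demanding ingredients --- the Existence and Truth Lemmas (Lemma \ref{extruth}), Lindenbaum's Lemma, and the equivalence of Proposition \ref{helpcomp} --- are all in place, and the argument above is essentially bookkeeping over them. The only mild point is the well-definedness of the canonical $(S,\Sigma)$-frame, namely the nonemptiness of $W^{\mathbf K}_s$ in each sort, which as indicated reduces to a one-line consistency remark.
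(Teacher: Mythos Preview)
Your proposal is correct and follows essentially the same route as the paper, which simply invokes Theorem \ref{corhelp} together with Proposition \ref{helpcomp} (with soundness already covered by Proposition \ref{sound}). Your additional remark that one must check $W^{\mathbf K}_s\neq\emptyset$ so that the canonical structure is indeed an $(S,\Sigma)$-frame is a detail the paper leaves implicit, but it is handled exactly as you indicate.
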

\begin{proof}
It follows by Theorem \ref{corhelp} and Proposition \ref{helpcomp}. 
\end{proof}

\section{Global deduction for $\mathcal{ML}_S$}\label{secglobal}

In this section we study the global deduction that is especially relevant in our setting: in this case the set of hypothesis is an arbitrary $S$-sorted set. Note that for local deduction we use ${\vds{s}}$ and $\mos{s}$ when the set of hypotheses and the conclusion are of sort $s\in S$, while for global deduction we use $\vdash $ and $\mos{}$ meaning that the set of hypothesis may be an arbitrary $S$-sorted set (the sort of conclusion is uniquely determined in a particular context). We study the global deduction from a syntactic and semantic point of view, we prove a completeness theorem  and a general form of the deduction theorem. 

  In the following $\Lambda, \Gamma\subseteq Form$ are $S$-sorted sets of formulas: $\Lambda$ is the set of axioms and we study the deduction from $\Gamma$ in $\logl$. 

\begin{definition} 
If $\mathcal M$ is an $(S,\Sigma)$-model such that ${\mathcal M}\mos{s}\Gamma_s$ for any $s\in S$  then we say that  $\mathcal M$ is a {\em model} for $\Gamma$, and we write ${\mathcal M}\mos{}\Gamma$.

Let $s\in S$ and $\phi\in Form_s$. We say that $\phi$ is a {\em global semantic consequence of} $\Gamma$ in $\logl$,  and we write $\Gamma\mos{}_{\logl}\phi$,  if $\mathcal{M}\mos{} \Gamma$ implies  $\mathcal{M}\mos{s} \phi$ for any model $\mathcal{M}$  such that 
${\mathcal M}\mos{}\logl$.

Let $s\in S$ and $\phi\in Form_s$. We say that $\phi$ is a {\em global sintactic consequence of} $\Gamma$ in $\logl$,  and we write $\Gamma{\vds{}}_{\logl}\phi$, if there exists a sequence $\phi_1,\ldots, \phi_n$ such that $\phi_n=\phi$ and, for any $i \in  [n] $,  $\phi_i\in Form_{s_i}$ is an axiom or $\phi_i\in \Gamma_{s_i}$ or  it is inferred from $\phi_1,\ldots, \phi_{i-1}$ using  (MP$_{s_i}$) and (UG$_\sigma^k$) for some $\sigma\in\Sigma_{t_1\ldots t_{m_i},s_i}$ and $k\in [m_i]$.
\end{definition}

One can easily see that the global deduction is {\em sound}:

\begin{proposition} For any $\Gamma\subseteq Form$,  $s\in S$ and $\phi\in Form_s$ 
\begin{center}
$\Gamma{\vds{}}_{\logl}\phi$ implies $\Gamma \mos{}_{\logl} \phi$
\end{center} 
 \end{proposition}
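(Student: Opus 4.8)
The plan is to argue by induction on the length of a global derivation, reducing everything to the soundness of the two inference rules already recorded in Proposition~\ref{sound}. First I would fix an arbitrary $(S,\Sigma)$-model $\mathcal{M}$ with $\mathcal{M}\mos{}\logl$ and $\mathcal{M}\mos{}\Gamma$, and take a witnessing sequence $\phi_1,\ldots,\phi_n$ for $\Gamma{\vds{}}_{\logl}\phi$, where $\phi_i\in Form_{s_i}$ and $\phi_n=\phi$ has sort $s$. The claim to establish by induction on $i$ is that $\mathcal{M}\mos{s_i}\phi_i$; instantiating at $i=n$ gives $\mathcal{M}\mos{s}\phi$, and since $\mathcal{M}$ was an arbitrary model of $\logl$ satisfying $\Gamma$, this yields $\Gamma\mos{}_{\logl}\phi$.

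For the non-inductive cases (including the base $i=1$): if $\phi_i$ is an axiom, i.e. $\phi_i\in\logl_{s_i}$, then $\mathcal{M}\mos{}\logl$ gives $\mathcal{M}\mos{s_i}\phi_i$ immediately; if $\phi_i\in\Gamma_{s_i}$, then $\mathcal{M}\mos{}\Gamma$ gives $\mathcal{M}\mos{s_i}\Gamma_{s_i}$, hence $\mathcal{M}\mos{s_i}\phi_i$.

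For the inference rules: if $\phi_i$ comes from $(MP_{s_i})$ applied to $\phi_j$ and $\phi_k=\phi_j\ri\phi_i$ with $j,k<i$ (so $s_j=s_k=s_i$), the induction hypothesis gives $\mathcal{M}\mos{s_i}\phi_j$ and $\mathcal{M}\mos{s_i}\phi_j\ri\phi_i$; fixing an arbitrary $w\in W_{s_i}$ and unwinding the satisfaction clauses for $\neg$ and $\vee$ (recall $\phi_j\ri\phi_i=\neg\phi_j\vee\phi_i$) forces $\mathcal{M},w\mos{s_i}\phi_i$, so $\mathcal{M}\mos{s_i}\phi_i$. If $\phi_i$ comes from $(UG_\sigma^k)$ applied to some $\phi_j$ with $j<i$, where $\sigma\in\Sigma_{t_1\cdots t_m,s_i}$, $\phi_j\in Form_{t_k}$ and $\phi_i=\spt(\psi_1,\ldots,\psi_{k-1},\phi_j,\psi_{k+1},\ldots,\psi_m)$, then the induction hypothesis gives $\mathcal{M},u\mos{t_k}\phi_j$ for every $u\in W_{t_k}$, which is precisely the premise used in the computation in the proof of Proposition~\ref{sound}; that computation then yields $\mathcal{M},w\mos{s_i}\spt(\psi_1,\ldots,\phi_j,\ldots,\psi_m)$ for all $w\in W_{s_i}$, i.e. $\mathcal{M}\mos{s_i}\phi_i$. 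This closes the induction.

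I do not anticipate a real obstacle: the proof is essentially bookkeeping. The one point that needs care is that $(UG_\sigma^k)$ changes the sort of the formula, so one must keep track of the sorts $s_i$ along the derivation and appeal to the sort-shifting soundness of generalization established in Proposition~\ref{sound}; granting that, the mono-sorted argument transfers verbatim.
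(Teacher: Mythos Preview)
Your proposal is correct and is exactly the standard induction-on-derivations argument the paper has in mind; the paper itself omits the proof entirely, prefacing the proposition with ``One can easily see that the global deduction is sound''. Your write-up supplies precisely the bookkeeping the paper elides, including the sort-tracking for $(UG_\sigma^k)$, so there is nothing to add or correct.
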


Inspired by similar results in the mono-sorted setting ({see e.g. \cite[Chapter 3.1]{kracht}}) we analyze the relation between local and global deduction from syntactic point of view. 
Let $\Gamma\subseteq Form$ be an $S$-sorted set of formulas. We denote $\Gamma_{\rm G}=\bigcup_k\Gamma^k$ where $\{\Gamma^k\}_k$ is an increasing sequence of $S$-sorted sets of formulas, defined as follows:
\begin{center}
$\Gamma^0=\Gamma$,  $\,\,\,\Gamma^{k+1}_s=\Gamma^k_s\cup\left\{\sigma^{\mb}(\psi_{s_1},\ldots,\psi_{s_{i-1}},\gamma,
          \psi_{s_{i+1}}\ldots, \psi_{s_n})\mid i\in  [n] ,\right.$ 
          
 \hspace*{7cm}         $\left. \sigma\in\Sigma_{s_1\cdots s_n, s},  \gamma\in \Gamma^k_{s_i}\right\}$.
 \end{center}

\medskip
          
 \begin{proposition}\label{helpded}
 If $\phi\in Form_s$ for some $s\in S$ and $\Gamma\subseteq Form$,
 \begin{center}
 $\Gamma{\vds{}}_{\logl}\phi$ \mbox{ iff } ${\Gamma_{\rm G}}_s{\vds{s}}_{\logl}\,\phi$.
 \end{center}
 \end{proposition}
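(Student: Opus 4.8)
The plan is to prove the two implications separately. For the direction $\Gamma_{\mathrm G}{}_s \vds{s}_{\logl} \phi \Rightarrow \Gamma\vds{}_{\logl}\phi$, the first step is to observe that each formula added to $\Gamma$ in passing from $\Gamma^k$ to $\Gamma^{k+1}$ is globally provable from $\Gamma$: indeed, if $\gamma\in\Gamma^k_{s_i}$ is already globally provable from $\Gamma$, then one application of the generalization rule $(\mathbf{UG}_\sigma^i)$ yields $\sigma^{\mb}(\psi_{s_1},\ldots,\gamma,\ldots,\psi_{s_n})$, so by induction on $k$ every element of $\Gamma_{\mathrm G}$ is globally provable from $\Gamma$. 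Hence if $\Gamma_{\mathrm G}{}_s\vds{s}_{\logl}\phi$, i.e.\ there are $\phi_1,\ldots,\phi_m\in\Gamma_{\mathrm G}{}_s$ with $\vds{s}_{\logl}(\phi_1\wedge\cdots\wedge\phi_m)\ri\phi$ (Definition \ref{dedhyp}), we may concatenate global proofs of the $\phi_j$ from $\Gamma$, then a proof of the theorem $(\phi_1\wedge\cdots\wedge\phi_m)\ri\phi$, then repeatedly use the conjunction introduction theorem of classical logic together with $(\mathbf{MP}_s)$ to conclude $\Gamma\vds{}_{\logl}\phi$.

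For the converse, $\Gamma\vds{}_{\logl}\phi \Rightarrow \Gamma_{\mathrm G}{}_s\vds{s}_{\logl}\phi$, I would argue by induction on the length $n$ of a global proof $\phi_1,\ldots,\phi_n=\phi$ of $\phi$ from $\Gamma$, strengthening the claim so that it applies to each $\phi_i$ in its own sort: for every $i$, $\Gamma_{\mathrm G}{}_{s_i}\vds{s_i}_{\logl}\phi_i$. If $\phi_i$ is an axiom or a $\logl$-theorem, it is trivially locally provable from $\emptyset\subseteq\Gamma_{\mathrm G}{}_{s_i}$. If $\phi_i\in\Gamma_{s_i}\subseteq\Gamma^0_{s_i}\subseteq\Gamma_{\mathrm G}{}_{s_i}$, again trivial. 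If $\phi_i$ is obtained by $(\mathbf{MP}_{s_i})$ from $\phi_j$ and $\phi_j\ri\phi_i$ with $j,k<i$, then both are locally provable from $\Gamma_{\mathrm G}{}_{s_i}$ by the induction hypothesis, and closure of local deduction under modus ponens (immediate from Definition \ref{dedhyp} and classical logic) gives the result.

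The one genuinely delicate case — and the main obstacle — is when $\phi_i=\sigma^{\mb}(\psi_{s_1},\ldots,\psi_{s_{j-1}},\phi_k,\psi_{s_{j+1}},\ldots,\psi_{s_n})$ is obtained from $\phi_k$ ($k<i$) by the generalization rule $(\mathbf{UG}_\sigma^j)$, where $\sigma\in\Sigma_{s_1\cdots s_n,s_i}$. Here the induction hypothesis gives only $\Gamma_{\mathrm G}{}_{s_j}\vds{s_j}_{\logl}\phi_k$, i.e.\ there exist $\gamma_1,\ldots,\gamma_r\in\Gamma_{\mathrm G}{}_{s_j}$ with $\vds{s_j}_{\logl}(\gamma_1\wedge\cdots\wedge\gamma_r)\ri\phi_k$. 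The key point is that each $\gamma_t$ lies in some $\Gamma^{m_t}_{s_j}$, so by the defining recursion $\sigma^{\mb}(\psi_{s_1},\ldots,\gamma_t,\ldots,\psi_{s_n})\in\Gamma^{m_t+1}_{s_j}\subseteq\Gamma_{\mathrm G}{}_{s_i}$; write $\delta_t$ for this formula. Using Proposition \ref{genprop}(i) applied to $(\gamma_1\wedge\cdots\wedge\gamma_r)\ri\phi_k$ we get $\vds{s_i}_{\logl}\sigma^{\mb}(\ldots,\gamma_1\wedge\cdots\wedge\gamma_r,\ldots)\ri\sigma^{\mb}(\ldots,\phi_k,\ldots)=\phi_i$, and Proposition \ref{genprop}(ii) (iterated) yields $\vds{s_i}_{\logl}(\delta_1\wedge\cdots\wedge\delta_r)\ri\sigma^{\mb}(\ldots,\gamma_1\wedge\cdots\wedge\gamma_r,\ldots)$. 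Chaining these two theorems gives $\vds{s_i}_{\logl}(\delta_1\wedge\cdots\wedge\delta_r)\ri\phi_i$ with all $\delta_t\in\Gamma_{\mathrm G}{}_{s_i}$, hence $\Gamma_{\mathrm G}{}_{s_i}\vds{s_i}_{\logl}\phi_i$, completing the induction. Taking $i=n$ finishes the proof.
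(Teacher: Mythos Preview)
Your proof is correct and follows essentially the same approach as the paper: induction on the length of a global proof for the forward direction, with the UG step handled by pushing the boxed operator over a finite conjunction of witnesses from $\Gamma_{\rm G}$ using Proposition~\ref{genprop}(i)--(ii); the paper cites only (ii) but implicitly uses (i) as well, so your argument is in fact a more explicit version of theirs. One minor slip: in the UG case you write $\sigma^{\mb}(\psi_{s_1},\ldots,\gamma_t,\ldots,\psi_{s_n})\in\Gamma^{m_t+1}_{s_j}$, but the target sort of $\sigma$ is $s_i$, so this should read $\Gamma^{m_t+1}_{s_i}$ (your subsequent inclusion $\subseteq\Gamma_{\rm G}{}_{s_i}$ is correct, so this is just a typo).
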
         
\begin{proof} 
 All deductions are in $\logl$, so we simply write $\vds{s}$ and ${\vds{}}$ for local and global deduction, respectively.
Assume that $\Gamma\vds{}\phi$ and let $\phi_1,\ldots, \phi_n$ be a global proof of $\phi$ from $\Gamma$. We prove 
that ${\Gamma_{\rm G}}_s\vds{s}\phi_i$ by induction on $i\leq n$. 

If  $i=1$ then $\phi\in \Gamma_s$. For the induction step, we  analyze only the case when $\phi_i$ is derived using the deduction rules:
 
\noindent {\it Case 1}. if  $\phi_j=\phi_k\to\phi_i$ for some $j,k<i$ then   
${\Gamma_{\rm G}}_s\vds{s}\phi_k$ and 
${\Gamma_{\rm G}}_s\vds{s}\phi_j$; since the local deduction is closed to modus ponens we get ${\Gamma_{\rm G}}_s\vds{s}\phi_i$;

\noindent {\it Case 2}.  $\phi_i =\sigma^{\mb}(\psi_1,\ldots,\psi_{l-1},\phi_j,\psi_{l+1}\ldots, \psi_m)$ for some $j<i$, $\sigma\in \Sigma$ and 
$\psi_1,\ldots, \psi_m$ of appropriate sorts. 
Using the induction hypothesis ${\Gamma_{\rm G}}_{s_j}{\vds{s_j}}\phi_j$, so there are $\gamma_{1},\ldots, \gamma_{k}\in {\Gamma_{\rm G}}_{s_j}$ such that $\vds{s_j}\gamma_{1}\wedge\ldots\wedge\gamma_{k}\to\phi_j$. We infer that  $\sigma^{\mb}(\psi_{s_1},\ldots,\psi_{s_{i-1}},\gamma_{k'},\psi_{s_{i+1}}\ldots, \psi_{s_p})\in {\Gamma_{\rm G}}_{s}$
for any $k'\in [k]$. By Proposition \ref{genprop}(ii) we get 
${\Gamma_{\rm G}}_{s}{\vds{s}}\sigma^{\mb}(\psi_{1},\ldots,\psi_{{l-1}},\phi_j,\psi_{s_{l+1}}\ldots, \psi_{m})$, so
\mbox{${\Gamma_{\rm G}}_{s}\vds{s}\phi$}.

The other implication is obvious. 
\end{proof}

As a direct consequence we get the following. 

\begin{corollary}(Global deduction theorem I) If $\varphi,\psi\in Form_s$ for some $s\in S$ and $\Gamma\subseteq Form$ then 
$\Gamma\vds{}_{\logl}\varphi\to \psi$ iff 
${\Gamma_{\rm G}}_s\cup\{\varphi\}{\vds{s}}_{\logl}\psi$.
\end{corollary}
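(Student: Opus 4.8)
The plan is to derive the Global deduction theorem I as an immediate corollary of Proposition~\ref{helpded} combined with the Local deduction theorem (Theorem~\ref{locdedth}). The statement to prove is that $\Gamma\vds{}_{\logl}\varphi\to\psi$ iff ${\Gamma_{\rm G}}_s\cup\{\varphi\}{\vds{s}}_{\logl}\psi$, for $\varphi,\psi\in Form_s$ and $\Gamma\subseteq Form$.

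First I would apply Proposition~\ref{helpded} to the formula $\varphi\to\psi$, which lies in $Form_s$: this gives $\Gamma\vds{}_{\logl}\varphi\to\psi$ iff ${\Gamma_{\rm G}}_s\vds{s}_{\logl}\varphi\to\psi$. This converts the global-deduction side of the equivalence into a purely local deduction with hypothesis set ${\Gamma_{\rm G}}_s$, all of whose formulas are of sort $s$ (by construction of $\Gamma_{\rm G}$, only formulas built so as to land in $Form_s$ are collected in the $s$-component). Then I would apply the Local deduction theorem (Theorem~\ref{locdedth}) with $\Phi_s:={\Gamma_{\rm G}}_s$: this yields ${\Gamma_{\rm G}}_s\vds{s}_{\logl}\varphi\to\psi$ iff ${\Gamma_{\rm G}}_s\cup\{\varphi\}_s\vds{s}_{\logl}\psi$. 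Chaining the two equivalences gives exactly the claimed statement.

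The proof is therefore a two-line composition of previously established results, and I would write it as such: ``By Proposition~\ref{helpded}, $\Gamma\vds{}_{\logl}\varphi\to\psi$ iff ${\Gamma_{\rm G}}_s\vds{s}_{\logl}\varphi\to\psi$; by Theorem~\ref{locdedth}, the latter holds iff ${\Gamma_{\rm G}}_s\cup\{\varphi\}\vds{s}_{\logl}\psi$.'' There is essentially no obstacle here; the only point requiring a moment's care is the sorting bookkeeping — one must note that $\varphi\to\psi$ and $\psi$ are both of sort $s$ (immediate, since $\varphi,\psi\in Form_s$), so that both Proposition~\ref{helpded} and Theorem~\ref{locdedth} are being invoked at the correct sort and with a same-sort hypothesis set, as their statements require. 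Given the way the excerpt labels this a ``direct consequence,'' I would keep the proof to a single sentence and not belabor it.
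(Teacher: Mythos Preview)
Your proof is correct and matches the paper's approach exactly: the paper presents this corollary immediately after Proposition~\ref{helpded} with the words ``As a direct consequence we get the following,'' leaving the reader to combine that proposition with the Local deduction theorem (Theorem~\ref{locdedth}) precisely as you do.
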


In the sequel we state another form of the deduction theorem. To do this we introduce further notations: for any $s\in S$ and $\varphi\in Form_s$ we define
 $\{\varphi\}^S$ by 
$\{\varphi\}^S_s=\{\varphi\}$ and 
$\{\varphi\}^S_t=\emptyset $ for $t\in S\setminus \{s\}$. 
Moreover,   $\{\varphi\}_G$ is $(\{\varphi\}^S)_G$.

\begin{theorem}(Global deduction theorem II)\label{thded2}
If $\varphi,\phi\in Form_s$ for some $s\in S$ and $\Gamma\subseteq
Form$ then the following are equivalent:\\
(i) $\Gamma\cup \{\varphi\}^S{\vds{}}_{\logl}\phi$\\
(ii) $\Gamma\vds{}_{\logl}\vp_1\wedge\cdots\wedge \vp_n\to \phi$ for
some $\vp_1,\ldots,\vp_n\in {\{\varphi\}_G}_s$.

\end{theorem}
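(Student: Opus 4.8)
The plan is to push everything through Proposition~\ref{helpded}, which trades global provability from $\Gamma$ for local provability (in a single sort) from $\Gamma_{\rm G}$, and then to exploit the fact that the operator $(-)_{\rm G}$ distributes over unions of its argument. The one auxiliary fact I would establish first is that, for arbitrary $S$-sorted sets $A,B\subseteq Form$,
\[
(A\cup B)_{\rm G}=A_{\rm G}\cup B_{\rm G},\qquad\text{so in particular}\qquad\bigl(\Gamma\cup\{\varphi\}^S\bigr)_{\rm G}=\Gamma_{\rm G}\cup\{\varphi\}_G .
\]
This is proved by an immediate induction on $k$ showing $(A\cup B)^k=A^k\cup B^k$ componentwise: the base case is the definition of level $0$, and for the inductive step one observes that the formulas added to sort $s$ at level $k+1$ are exactly the $\sigma^{\mb}(\psi_{s_1},\dots,\gamma,\dots,\psi_{s_n})$ with $\sigma\in\Sigma_{s_1\cdots s_n,s}$, a position $i\in[n]$, the $\psi_{s_j}$ ranging over \emph{all} formulas of the indicated sorts, and $\gamma\in(A\cup B)^k_{s_i}=A^k_{s_i}\cup B^k_{s_i}$; since the choice of $\sigma$, of $i$ and of the $\psi_{s_j}$ is independent of which of $A^k,B^k$ the formula $\gamma$ comes from, this set of new formulas splits as the union of the two sets obtained from $\gamma\in A^k_{s_i}$ and from $\gamma\in B^k_{s_i}$, and combining with the inductive hypothesis at level $k$ gives level $k+1$. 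Taking the union over $k$ yields the claim.

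For the implication $(i)\Rightarrow(ii)$ I would argue as follows. By Proposition~\ref{helpded}, $\Gamma\cup\{\varphi\}^S\vds{}_{\logl}\phi$ is equivalent to $\bigl((\Gamma\cup\{\varphi\}^S)_{\rm G}\bigr)_s\vds{s}_{\logl}\phi$, which by the displayed fact is $\bigl({\Gamma_{\rm G}}_s\cup{\{\varphi\}_G}_s\bigr)\vds{s}_{\logl}\phi$. Unpacking Definition~\ref{dedhyp}, there is a finite conjunction of elements of ${\Gamma_{\rm G}}_s\cup{\{\varphi\}_G}_s$ implying $\phi$ as a $\logl$-theorem; sorting those elements into a list $\delta_1,\dots,\delta_k$ lying in ${\Gamma_{\rm G}}_s$ and a list $\vp_1,\dots,\vp_n$ lying in ${\{\varphi\}_G}_s$ (and, if the second list is empty, adjoining $\varphi$ itself, which has sort $s$ and lies in $\{\varphi\}^S_s\subseteq{\{\varphi\}_G}_s$, so that $n\geq1$), classical propositional reasoning --- available through axiom scheme (a0) and modus ponens --- turns $\vds{s}_{\logl}\delta_1\wedge\cdots\wedge\delta_k\wedge\vp_1\wedge\cdots\wedge\vp_n\to\phi$ into $\vds{s}_{\logl}\delta_1\wedge\cdots\wedge\delta_k\to(\vp_1\wedge\cdots\wedge\vp_n\to\phi)$ (if the $\delta$-list is empty one already has $\vds{s}_{\logl}\vp_1\wedge\cdots\wedge\vp_n\to\phi$, hence $(ii)$ trivially). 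Reading this back through Definition~\ref{dedhyp} gives ${\Gamma_{\rm G}}_s\vds{s}_{\logl}\vp_1\wedge\cdots\wedge\vp_n\to\phi$, and since this formula has sort $s$, Proposition~\ref{helpded} yields $\Gamma\vds{}_{\logl}\vp_1\wedge\cdots\wedge\vp_n\to\phi$, which is $(ii)$.

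For $(ii)\Rightarrow(i)$ I would run the same chain in reverse: from $\Gamma\vds{}_{\logl}\vp_1\wedge\cdots\wedge\vp_n\to\phi$, Proposition~\ref{helpded} gives ${\Gamma_{\rm G}}_s\vds{s}_{\logl}\vp_1\wedge\cdots\wedge\vp_n\to\phi$, hence (Definition~\ref{dedhyp}, plus the same re-bracketing) $\delta_1,\dots,\delta_k\in{\Gamma_{\rm G}}_s$ with $\vds{s}_{\logl}\delta_1\wedge\cdots\wedge\delta_k\wedge\vp_1\wedge\cdots\wedge\vp_n\to\phi$; all the $\delta_j$ and $\vp_j$ lie in ${\Gamma_{\rm G}}_s\cup{\{\varphi\}_G}_s=\bigl((\Gamma\cup\{\varphi\}^S)_{\rm G}\bigr)_s$, so Definition~\ref{dedhyp} gives $\bigl((\Gamma\cup\{\varphi\}^S)_{\rm G}\bigr)_s\vds{s}_{\logl}\phi$ and Proposition~\ref{helpded} gives $(i)$. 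I do not expect a real obstacle here: the distributivity of $(-)_{\rm G}$ over unions is the only genuinely new ingredient and is routine, and the remaining care is purely bookkeeping --- tracking empty conjunctions (handled by the empty-list remarks and by always keeping $\varphi$ among the $\vp_j$ in the forward direction) and checking that every conjunction formed is of sort $s$, which holds since every formula involved is of sort $s$.
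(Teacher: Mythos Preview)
Your proof is correct, but it takes a genuinely different route from the paper's. The paper proves $(i)\Rightarrow(ii)$ by a direct induction on the length of a global derivation of $\phi$ from $\Gamma\cup\{\varphi\}^S$, re-running in effect the style of argument already used for Proposition~\ref{helpded}: for each formula $\gamma_i$ in the derivation it case-splits on whether $\gamma_i$ is an axiom, a hypothesis, obtained by modus ponens, or obtained by (UG), and in the last case it invokes $(K_\sigma)$ and Proposition~\ref{genprop} to push the box over the conjunction of witnesses. For $(ii)\Rightarrow(i)$ the paper simply notes $\Gamma\cup\{\varphi\}^S\vds{}\{\varphi\}_G$.

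You instead observe the purely combinatorial fact $(A\cup B)_{\rm G}=A_{\rm G}\cup B_{\rm G}$ --- which holds precisely because the context formulas $\psi_{s_j}$ in the definition of $(-)_{\rm G}$ range over all formulas, not just those in the argument set --- and then use Proposition~\ref{helpded} as a black box to transport the problem to local deduction at sort $s$, where classical propositional re-bracketing finishes the job. This is more conceptual and shorter: it isolates exactly the new content (distributivity of $(-)_{\rm G}$) and avoids repeating the induction-on-derivation machinery. The paper's approach, by contrast, is self-contained in the sense that one sees explicitly how the witnesses $\vp_1,\dots,\vp_n$ arise from the individual (UG) steps in the derivation; yours hides this inside Proposition~\ref{helpded}. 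Both are valid; yours makes better reuse of what has already been proved.
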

\begin{proof}
All deductions are in $\logl$, so we simply write ${\vds{s}}$  and  ${\vds{}}$ for local and global deduction, respectively.
We note that $\Gamma\cup \{\varphi\}^S{\vds{}}\{\vp\}_G$, so (ii) implies (i) is straightforward.

We assume now that $\Gamma \cup \{\varphi\}^S \vds{}\phi$, so there exists
$\gamma_1,\ldots, \gamma_m$ a global proof of $\phi$ from
$\Gamma\cup \{\varphi\}^S$. By induction on $i\in [m]$ we prove 
\begin{center}
$\Gamma\vds{}\vp_1\wedge\cdots\wedge\vp_n\to \gamma_i$ for some $\vp_1,\ldots,\vp_n\in {\{\varphi\}_G}_{s_i}$  
\end{center}
where $s_i$ is the sort of $\gamma_i$. For $m=1$ we consider two cases:

\noindent {\it Case 1}. if $\gamma_1 \in \Gamma$ or $\gamma_1$ is an
axiom then the conclusion is obvious;

\noindent {\it Case 2}. if $\gamma_1 \in \{\vp\}_G$ then
$\Gamma{\vds{}}\gamma_1\to\gamma_1$ is true.

\noindent
For the induction step, we only analyze the case when $\gamma_i$ is derived using the deduction rules:

\noindent {\it Case 3}. if $\gamma_k=\gamma_j\to\gamma_i$ with
$j,k<i$ then we can find $\vp_1,\ldots,\vp_n\in [\varphi]_G$ and $l\in [n]$ such that 
$\Gamma{\vds{}}\vp_1\wedge\cdots\wedge \vp_l\to \gamma_j$ and $\Gamma{\vds{}}\vp_l\wedge\cdots\wedge \vp_n\to \gamma_k$. Using Proposition \ref{helpded} we consider local deduction on the sort $s_i$ and we can use the theorems of classical propositional logic on $s_i$. It follows that that 
$\Gamma\vds{}\vp_1\wedge\cdots\wedge \vp_n\to \gamma_i$.

\noindent {\it Case 4}.  let $\gamma_i = \spt (\psi_1,
\ldots,\psi_{l-1}, \gamma_j,\psi_{l+1}, \ldots, \psi_m)$
 where  $j<i$, $\sigma\in \Sigma$ and 
$\psi_1,\ldots, \psi_m$ have appropriate sorts; using the induction hypothesis and the universal generalization we get
\begin{center}
 $\Gamma{\vds{}} \spt(\psi_1,
\ldots,\psi_{l-1},\vp_1\wedge\cdots\wedge \vp_n\to
\gamma_j,\psi_{l+1}, \ldots, \psi_m)$;
\end{center}
for some $\vp_1,\ldots,\vp_n\in {\{\varphi\}_G}_{s_j}$ where $s_j$ is the sort of $\gamma_j$.
Using Proposition \ref{helpded} we consider local deduction on the sort $s_i$ of $\gamma_i$, so we can use Proposition
\ref{genprop} and $(K_\sigma^l)$ in order to infer that
\begin{center}
 ${\Gamma_G}_{s_i}{\vds{s_i}}
\spt (\ldots,\psi_{i-1},\vp_1 ,\psi_{i+1}, \ldots)
\wedge \ldots \wedge \spt (\ldots,\psi_{i-1},\vp_n ,\psi_{i+1},
\ldots) \to \gamma_i$ 
\end{center}
Since $\vp_k\in {\{\varphi\}_G}_{s_j}$ we get
$\spt (\ldots,\psi_{i-1},\vp_k ,\psi_{i+1}, \ldots)\in {\Gamma_G}_{s_i}$ for any $k\in [n]$ and we use again Proposition \ref{helpded} to get the desired conclusion.
\end{proof}

In the sequel we make preliminary steps towards proving a  global completeness theorem. To do this we define the {\em global consistency} as follows:  an $S$-sorted set  $\Gamma\subset Form$  is {\em globally inconsistent} in $\logl$ if $\Gamma {\vds{}}_{\logl}\bot_s$ for some $s\in S$ and {\em globally consistent} otherwise.

\medskip 

The next definition is a straightforward  generalization of  \cite[Definition 2.5]{mod}.

\begin{definition}
If ${\mathcal M}=({\mathcal W}, {\mathcal R},\rho)$ and ${\mathcal M}'=({\mathcal W}', {\mathcal R}',\rho')$ are $(S,\Sigma)$-models we say that ${\mathcal M}'$ is a submodel of $\mathcal M$ if ${\mathcal W}\subseteq {\mathcal W}'$, $R'_\sigma$ is the restriction of $R_\sigma$ for any $\sigma\in \Sigma$ and $\rho'_s(p)=\rho_s(p)\cap W'_s$  for any $s\in S$ and $p\in P_s$. We say that the submodel ${\mathcal M}'$ is a {\em generated $(S,\Sigma)$-submodel} of ${\mathcal M}$ if for any $\sigma\in \Sigma_{s1\ldots s_n,s}$
\begin{center}
$w\in W'_s$ and $R_\sigma ww_1\ldots w_n$ implies 
$w_1\in W'_{s_1},\ldots, w_n\in W'_{s_n}$
\end{center}
If ${\mathcal X}\subseteq {\mathcal W}$ is an $S$-sorted subset, the  {\em submodel generated by ${\mathcal X}$} is the smallest (with respect to inclusion) generated submodel of $\mathcal M$ that includes $\mathcal X$.
\end{definition}

\begin{theorem}\label{gcomp}
 If the $S$ sorted set  $\Gamma\subseteq Form$ is globally consistent in $\logl$,  then there exists a model of $\logl$ that is also a model from $\Gamma$. 
 \end{theorem}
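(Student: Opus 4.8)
The strategy is the usual canonical-model argument, but adapted so that the single model we build validates all of $\Gamma$ globally. The key idea is to use the translation between global and local deduction established in Proposition~\ref{helpded}: since $\Gamma$ is globally consistent in $\logl$, the $S$-sorted set $\Gamma_{\rm G}$ has the property that, for each $s\in S$, the component ${\Gamma_{\rm G}}_s$ is (locally) $\logl$-consistent — indeed if ${\Gamma_{\rm G}}_s\vds{s}_{\logl}\bot_s$ then $\Gamma\vds{}_{\logl}\bot_s$ by Proposition~\ref{helpded}, contradicting global consistency. So the first step is to record that $\Gamma_{\rm G}$ is globally consistent and, moreover, each of its sorted components is locally consistent.

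**Second step.** Work inside the canonical model $\mathcal M^{\logl}$ of Definition~\ref{18}. For each $s\in S$, apply Lindenbaum's Lemma to extend ${\Gamma_{\rm G}}_s$ to a maximal $\logl$-consistent set $w_s^0\in W^{\logl}_s$. Let $\mathcal X=\{w_s^0\}_{s\in S}$ be the resulting $S$-sorted subset of $\mathcal W^{\logl}$, and let $\mathcal M'$ be the submodel of $\mathcal M^{\logl}$ generated by $\mathcal X$. Then I claim $\mathcal M'$ is a model of $\logl$ (every axiom of $\logl$ is true at every world of the canonical model by Lemma~\ref{extruth}(iii) and Lemma~\ref{16}(ii), and truth is preserved in generated submodels — this is the standard generation lemma, whose proof is the obvious $S$-sorted adaptation). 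The essential point is that $\mathcal M'$ is a model \emph{for} $\Gamma$: we must check $\mathcal M'\mos{s}\Gamma_s$ for every $s$, i.e. that each $\gamma\in\Gamma_s$ holds at \emph{every} world of $W'_s$, not just at $w_s^0$.

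**Third step (the heart of the argument).** To see that $\gamma\in\Gamma_s$ is universally true in $\mathcal M'$, take an arbitrary $v\in W'_s$. Since $\mathcal M'$ is generated by $\mathcal X$, there is a finite $\mathcal R$-path from some $w_t^0$ to $v$: that is, $v$ is reached from a seed world by finitely many steps, each step replacing the current world $w$ (of sort $s'$) by some $w_i$ with $\mathcal R^{\logl}_\sigma w w_1\ldots w_n$. The claim is that along such a path membership of the relevant $\sigma^{\mb}$-prefixed formulas is preserved: if every $\gamma\in\Gamma^k_{s'}$ lies in $w$ and $\mathcal R^{\logl}_\sigma w w_1\ldots w_n$, then every $\gamma'\in\Gamma^k_{s_i}$ lies in $w_i$. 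This is exactly where the closure operator $\Gamma_{\rm G}$ does its work: by construction $\sigma^{\mb}(\psi_1,\ldots,\gamma',\ldots,\psi_n)\in\Gamma^{k+1}_{s'}\subseteq{\Gamma_{\rm G}}_{s'}\subseteq w$ for $\gamma'\in\Gamma^k_{s_i}$ and any padding $\psi_j$; by Lemma~\ref{extruth}(i) (the dual characterisation of $\mathcal R^{\logl}$) together with Lemma~\ref{16}, the fact that $\sigma^{\mb}(\ldots)\in w$ and $\mathcal R^{\logl}_\sigma ww_1\ldots w_n$ forces $\gamma'\in w_i$ (choosing the padding formulas so that they are, say, $\top$ at the other coordinates of the $w_j$'s, or more carefully, running the contrapositive of Lemma~\ref{extruth}(i)). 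Iterating along the finite path, starting from $w_t^0\supseteq{\Gamma_{\rm G}}_t$ which contains all of $\bigcup_k\Gamma^k_t$, we conclude $\gamma\in v$, hence $\mathcal M',v\mos{s}\gamma$ by Lemma~\ref{extruth}(iii). Since $v$ was arbitrary, $\mathcal M'\mos{s}\Gamma_s$, and since $s$ was arbitrary, $\mathcal M'\mos{}\Gamma$.

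**Expected obstacle.** The routine-but-delicate point is the padding/contraposition bookkeeping in the third step: Lemma~\ref{extruth}(i) gives $\sigma^{\mb}(\psi_1,\ldots,\psi_n)\in w \Rightarrow \psi_i\in w_i \text{ for some } i$, a disjunction over coordinates, whereas I want to land the conclusion in a \emph{specified} coordinate $i$. One handles this by choosing the other arguments $\psi_j$ ($j\ne i$) to be formulas that are \emph{not} in $w_j$ — e.g. $\bot_{s_j}$ — so that the only way the disjunction can be satisfied is at coordinate $i$; one must check that $\Gamma_{\rm G}$ is closed under exactly this kind of padded insertion, which it is, since the definition of $\Gamma^{k+1}_s$ quantifies over all padding formulas $\psi_{s_1},\ldots,\psi_{s_n}$ of the appropriate sorts. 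A secondary, purely notational nuisance is making the "finite $\mathcal R$-path" description of the generated submodel precise in the $S$-sorted polyadic setting (the submodel is built by closing $\mathcal X$ under the relations $\mathcal R^{\logl}_\sigma$), but this is exactly parallel to the mono-sorted generation lemma and needs no new idea.
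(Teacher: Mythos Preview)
Your proposal is correct and follows essentially the same route as the paper: pass from global to local consistency via Proposition~\ref{helpded}, work inside the canonical model $\mathcal M^{\logl}$, and take a generated submodel whose worlds all contain the appropriate component of $\Gamma_{\rm G}$. The only difference is the choice of seed: the paper seeds with the full $S$-sorted set $\mathcal W^{\Gamma}_s=\{\Phi_s\mbox{ maximal $\logl$-consistent}\mid {\Gamma_{\rm G}}_s\subseteq\Phi_s\}$, whereas you pick a single Lindenbaum extension $w_s^0$ per sort. Both work, and both ultimately rest on the same closure fact --- that if $w\supseteq{\Gamma_{\rm G}}_s$ and $\mathcal R^{\logl}_\sigma w w_1\ldots w_n$ then $w_i\supseteq{\Gamma_{\rm G}}_{s_i}$ --- which you spell out via the padding-with-$\bot$ argument and the paper leaves implicit in ``it follows''. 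With the paper's larger seed that closure fact says $\mathcal W^{\Gamma}$ is already $\mathcal R^{\logl}$-closed, so no path induction is needed; your single-point seed forces the path argument, but the content is identical. One small wording fix: the invariant preserved along an $\mathcal R$-step is ``$w\supseteq{\Gamma_{\rm G}}_{s'}$'' (the full union), not ``$w\supseteq\Gamma^k_{s'}$'' for a fixed $k$, since producing $\gamma'\in\Gamma^k_{s_i}$ in $w_i$ requires $\sigma^{\mb}(\ldots,\gamma',\ldots)\in\Gamma^{k+1}_{s'}\subseteq w$; you clearly have this in mind given your final clause, but the inductive hypothesis should be stated at the level of $\Gamma_{\rm G}$ throughout.
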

\begin{proof} 
We follow the proof of  \cite[Proposition 3.1.3]{kracht}.
Since $\Gamma$ is globally consistent in $\logl$, by Proposition \ref{helpded}, ${\Gamma_G}_s$ is $\logl$-consistent for any $s\in S$, so we define 
\begin{center}
$\mathcal{W}^\Gamma_s=\{ \Phi_s \subseteq Form_s \mid \Phi_s \mbox{ is maximal $\logl$-consistent},{\Gamma_G}_s\subseteq \Phi \}$. 
\end{center}
Let ${\mathcal M}^\Gamma$ be the submodel of the canonical model ${\mathcal M}^{\logl}$ generated by 
\mbox{${\mathcal W}^\Gamma=\{{\mathcal W}^\Gamma_s\}_{s\in S}$}.
It follows that ${\mathcal M}^\Gamma$ is a $\logl$-model and 
${\mathcal M}^\Gamma\mos{}\Gamma$.
\end{proof}

Note that the global consistency is particularly interesting in a many-sorted setting and it is affected by the way sorts are connected through operations. 

\section{ Algebraic semantics for ${\mathcal M}{\mathcal L}_S$}
\label{secalg}
In this section we add an algebraic perspective to our logic. Since the boolean algebras with operators (BAO) and the complex algebras are the structures of the polyadic modal logic, we define their many-sorted versions. Therefore, we introduce the \textit{many-sorted boolean algebras with operators} and we generalize  the J\'{o}nsson-Tarski Theorem. In doing this we follow closely \cite[Chapter 5]{mod}.

Recall that $(S,\Sigma)$ is a fixed many-sorted signature.

\begin{definition}\label{sbao}
 An {\em $(S,\Sigma)$-boolean algebra with operators} ({\em $(S,\Sigma)$-BAO}) is a structure
\begin{center} 
 $ \mathfrak{A} = (\lbrace {\mathcal A}_s\rbrace _{s\in S}, \lbrace f_\tr\rbrace_{\tr \in \Sigma}) $
 \end{center}
where $\mathcal{A}_s=(A_s, \vee_s, \neg_s, 0_s)$ is a boolean algebra for any sort $s\in S$ and, for any  $\sigma \in \Sigma_{s_1 \ldots s_n, s}$,   $f_\tr : A_{s_1} \times \ldots \times A_{s_n} \rightarrow A_s$ satisfies the following properties:\\
(N)  $f_\tr(a_1, \ldots, a_n) = 0_s$ whenever $a_i = 0_{s_i}$  for some $i \in  [n] $,\\
(A) $ f_\tr(a_1, \ldots,a_i \vee_{s_i} a'_{i}, \ldots, a_n) = 
f_\tr(a_1, \ldots,a_i, \ldots, a_n) \vee_s f_\tr(a_1, \ldots, a'_{i}, \ldots, a_n)$, for any $i\in [n]$.

\end{definition}

\noindent In the above definition, N stands for {\em Normality} and A stands for {\em Additivity}. The dual operators are defined by
$f_{\sigma^{\mb}}(a_1, \ldots, a_n):=\neg_s f_\sigma(\neg_{s_1}a_1, \ldots,\neg_{s_n} a_n)$ for any $a_1\in A_{s_1},\ldots, a_n\in A_{s_n}$.

An $(S,\Sigma)$-BAO can be equivalently defined as a many-sorted structure
\begin{center}
$ \mathfrak{A} = (\lbrace A_s\rbrace _{s\in S}, \lbrace \cup_s \rbrace_{s\in S}, \lbrace \neg _s\rbrace _{s\in S}, \lbrace 0_s \rbrace _{s\in S},\lbrace f_\tr\rbrace_{\tr \in \Sigma}) $
\end{center}
satisfying the equations of boolean algebras on any sort. The notions of subalgebra, congruence, homomorphism are defined as in universal algebra. Consequently,  if $h:\mathfrak{A}\to\mathfrak{B}$ is an $(S,\Sigma)$-BAO homomorphism then $h_s:{\mathcal A}_s\to {\mathcal B}_s$ are boolean algebra homomorphisms for any $s\in S$.  

In the sequel we shall omit the sort when the context is clear.

Our main examples are the {\em $(S,\Sigma)$-complex algebras} and the {\em Lindenbaum-Tarski algebra} of ${\mathcal M}{\mathcal L}_S$.

\begin{example}(Complex algebras)
Let  ${\mathcal F}=\left(\{ W_s \}_{s\in S}, \{\mathcal{R}_{\srb}\}_{\srb \in \Sigma}\right)$ be an $(S,\Sigma)$-frame. For any $s\in S$ let $\mathfrak{P}(W_s) =( \mathcal{P}(W_s),\cup_s, \neg_s, \emptyset_s)$ be the powerset algebra of $W_s$. Hence the {\em full complex $(S,\Sigma)$-algebra} determined by $\mathcal F$ is 
\begin{center}
${\mathfrak F}=(\lbrace \mathfrak{P}(W_s)\rbrace_{s\in S},\lbrace m_\sigma\rbrace_{\sigma \in \Sigma})$
\end{center}
where, for any $\sigma \in \Sigma_{s_1 \ldots s_n, s}$  we set
$m_\sigma: \mathcal{P}(W_{s_1}) \times \cdots \times \mathcal{P}(W_{s_n}) \rightarrow \mathcal{P}(W_{s})$
\begin{center}
 \mbox{$m_\sigma(X_1, \ldots, X_n)= \lbrace w \in W_s|\,
\mathcal{R}_\sigma ww_1\ldots w_n  \mbox{ for some } \  w_1 \in X_{1}, \ldots, w_n \in X_{n}\rbrace$}
\end{center}
\noindent and the dual $l_\sigma\left(X_1, \ldots, X_n \right):=\neg_s m_\sigma\left(\neg_{s_1}X_1, \ldots,\neg_{s_n} X_n \right)$ is defined by 
\begin{center}
 $w\in l_{\mathcal{R}_\sigma} \left(X_1, \ldots, X_n \right)$ iff $R_\sigma ww_1, \ldots, w_n $ implies $w_i \in X_i$ for some $\ i \in  [n] $.
\end{center}

\noindent One can easily see that $\mathfrak F$ is an
$(S,\Sigma)$-BAO. 
\end{example}

\begin{definition}
A {\em $(S,\Sigma)$-complex algebra} is a subalgebra of a full complex algebra determined by an $(S,\Sigma)$-frame. 
\end{definition}

Let $\mathfrak{A}=\left( \{\mathcal{A}_s\}_{s\in S}, \{f_\sigma\}_{\sigma \in \Sigma}\right)$ be an $(S,\Sigma)$-BAO. For any $s\in S$ let $Uf(A_s)$  be the set of all  ultrafilters  of ${\mathcal A}_s$. For any 
$\sigma\in \Sigma_{s_1\ldots s_n,s}$ and $w\in Uf(A_s),$ $ w_1\in Uf(A_{s_1}), \ldots , w_n\in Uf(A_{s_n})$ we define
\begin{center}
$ Q_{f_\sigma} ww_1\ldots w_n \mbox{ iff for all }  a_1 \in w_1, \ldots, a_n\in w_n, \ f_\sigma(a_1, \ldots, a_n) \in w.$
\end{center}
The $(S,\Sigma)$-frame ${\mathcal U}f(\mathfrak{A})=(\{Uf(A_s)\}_{s\in S}, \{Q_{f_\sigma}\}_{\sigma \in \Sigma} )$ is the {\em ultrafilter frame of $\mathfrak{A}$}. Denote ${\mathfrak F}({\mathfrak A})$ the {\em full complex algebra determined by ${\mathcal U}f(\mathfrak{A})$}.

\medskip

We are ready now to prove the generalization of the J\'{o}nsson-Tarski theorem to $(S,\Sigma)$-BAOs.

\begin{theorem}(J\'{o}nsson-Tarski theorem for $(S,\Sigma)$-BAOs)\label{jttheorem}
Any $(S,\Sigma)$-BAO $\mathfrak{A}$ can be embedded in ${\mathfrak F}({\mathfrak A})$, the full complex $(S,\Sigma)$-algebra of its ultrafilter $(S,\Sigma)$-frame.  
\end{theorem}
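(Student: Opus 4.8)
The plan is to mimic, sort by sort, the Stone representation and then check compatibility with the operators. For each $s\in S$ define $r_s\colon A_s\to\mathcal P(Uf(A_s))$ by $r_s(a)=\{u\in Uf(A_s)\mid a\in u\}$. By Stone's representation theorem each $r_s$ is an injective Boolean-algebra homomorphism into $\mathfrak{P}(Uf(A_s))$ (injectivity is the only point where the ultrafilter theorem is used). Put $r=\{r_s\}_{s\in S}$. Recalling that $\mathfrak F(\mathfrak A)$ is the full complex algebra of $\mathcal U f(\mathfrak A)=(\{Uf(A_s)\}_{s\in S},\{Q_{f_\sigma}\}_{\sigma\in\Sigma})$, the whole statement reduces to showing that for every $\sigma\in\Sigma_{s_1\ldots s_n,s}$ and all $a_i\in A_{s_i}$,
\[
  r_s\bigl(f_\sigma(a_1,\ldots,a_n)\bigr)
  = m_\sigma\bigl(r_{s_1}(a_1),\ldots,r_{s_n}(a_n)\bigr),
\]
where $m_\sigma$ is formed with the accessibility relation $Q_{f_\sigma}$. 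Since the dual operators are term-definable from $f_\sigma$, $m_\sigma$ and the Boolean operations, their preservation is then automatic, and $r$ is an $(S,\Sigma)$-BAO embedding $\mathfrak A\hookrightarrow\mathfrak F(\mathfrak A)$, as required.

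The inclusion $\supseteq$ is immediate: if $u$ belongs to the right-hand side there are ultrafilters $u_i\ni a_i$ with $Q_{f_\sigma}uu_1\ldots u_n$, and then $f_\sigma(a_1,\ldots,a_n)\in u$ straight from the definition of $Q_{f_\sigma}$, i.e. $u\in r_s(f_\sigma(a_1,\ldots,a_n))$. The content is the converse, which is the algebraic analogue of the Existence Lemma: assuming $f_\sigma(a_1,\ldots,a_n)\in u$, one must produce ultrafilters $u_i\ni a_i$ ($i\in[n]$) such that $f_\sigma(b_1,\ldots,b_n)\in u$ whenever $b_i\in u_i$ for all $i$. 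I would construct the $u_i$ one coordinate at a time by a Zorn/bulldozing argument: having fixed $u_1,\ldots,u_{k-1}$ with the property ``$f_\sigma(b_1,\ldots,b_{k-1},a_k,\ldots,a_n)\in u$ for all $b_j\in u_j$'', consider the family of filters $G$ on $A_{s_k}$ with $a_k\in G$ and ``$f_\sigma(b_1,\ldots,b_{k-1},b_k,a_{k+1},\ldots,a_n)\in u$ for all $b_j\in u_j$ and $b_k\in G$''. This family contains the principal filter $\uparrow\! a_k$ (using monotonicity of $f_\sigma$ in its $k$-th argument, which follows from Additivity) and is closed under unions of chains, so it has a maximal element $u_k$; a standard argument shows $u_k$ is an ultrafilter (if $c,\neg c\notin u_k$, maximality yields witnesses $b,b'\in u_k$ and $b^{(j)}\in u_j$ with $f_\sigma(\ldots,b\wedge c,\ldots)\notin u$ and $f_\sigma(\ldots,b'\wedge\neg c,\ldots)\notin u$; meeting the witnesses and applying Additivity in the $k$-th argument together with primeness of $u$ gives $f_\sigma(\ldots,b\wedge b',\ldots)\notin u$ with $b\wedge b'\in u_k$, contradicting the defining property of $u_k$). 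After $n$ steps $u_1,\ldots,u_n$ witness $Q_{f_\sigma}uu_1\ldots u_n$ with $u_i\in r_{s_i}(a_i)$, so $u$ lies in the right-hand side.

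The main obstacle is precisely this iterated bulldozing: at stage $k$ the defining condition quantifies universally both over the already-fixed ultrafilters $u_1,\ldots,u_{k-1}$ and over the filter $G$ under construction, and one has to check that nonemptiness and the ``maximal filter is an ultrafilter'' step still go through with these extra parameters present. They do, because Additivity is a coordinatewise property and the witnesses arising from the two failure cases can always be combined by taking meets. The move to the many-sorted setting adds nothing new here: each $u_i$ is extracted inside its own $Uf(A_{s_i})$, and Normality/Additivity are imposed per operation and per argument, exactly as in the mono-sorted development of \cite[Chapter 5]{mod}. Everything else (that $r_s$ is a Boolean embedding on each sort, that $\mathfrak F(\mathfrak A)$ is an $(S,\Sigma)$-BAO, that the dual operators are preserved) is routine and has already been recorded in the excerpt.
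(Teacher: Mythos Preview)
Your proposal is correct and follows the same overall architecture as the paper: define the sortwise Stone map $r_s(a)=\{u\in Uf(A_s)\mid a\in u\}$, observe that each $r_s$ is a Boolean embedding, and reduce everything to the preservation identity $r_s(f_\sigma(a_1,\ldots,a_n))=m_\sigma(r_{s_1}(a_1),\ldots,r_{s_n}(a_n))$, whose nontrivial direction is the ultrafilter Existence Lemma.

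Where you diverge is in the \emph{execution} of that Existence Lemma. The paper argues by induction on the arity of $f_\sigma$ (following \cite[Theorem~5.43]{mod}): for the unary base case it considers the ``dual filter'' $F=\{v\mid \neg f(\neg v)\in w\}$, checks it is closed under meets, extends it to an ultrafilter $w_1$, and then appeals to the inductive machinery of \cite{mod} for higher arities. You instead bulldoze coordinate by coordinate, at stage $k$ applying Zorn's lemma to the poset of filters $G\ni a_k$ on $A_{s_k}$ satisfying ``$f_\sigma(b_1,\ldots,b_{k-1},b_k,a_{k+1},\ldots,a_n)\in u$ for all $b_j\in u_j$, $b_k\in G$'', and then showing the maximal such filter is an ultrafilter via additivity and primeness of $u$. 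Both arguments are standard and equivalent; yours is more self-contained (no induction on arity, no reference back to \cite{mod}) at the cost of carrying slightly heavier bookkeeping with the already-constructed $u_1,\ldots,u_{k-1}$ as parameters. One small point you leave implicit: when you invoke maximality on the filter generated by $u_k\cup\{c\}$, you should note that this filter is proper (else $\neg c\in u_k$) and still contains $a_k$, so its failure to lie in the family really does produce the witnesses you need.
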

\begin{proof} 
We only sketch the proof, since it follows closely the proof for the mono-sorted setting (see \cite[Theorem 5.43]{mod}).  
Let $\mathfrak{A}=\left( \{\mathcal{A}_s\}_{s\in S}, \{f_\sigma\}_{\sigma \in \Sigma}\right)$ be an $(S,\Sigma)$-BAO and $W=\{W_s\}_{s\in S}$, where $W_s$ is the ultrafilter set of $\mathcal{A}_s$. Using the Stone Representation Theorem we get a morphism
 $r=\{r_s\}_{s\in S}$ such that $r_s:A_s \ri \mathcal{P}(W_{s})$ is defined  by $r_s(a)=\{ w \in W_s| a\in w \}$.
 Note that $r_s$ is a boolean  embedding for any $s\in S$. We have to prove that $r$ is an embedding of
 $\mathfrak{A}$ into the full complex algebra of its ultrafilter frame, i.e. we have to prove that $r$ is a modal homomorphism:
\[\label{homs}
r_s(f_{\sigma}(a_1, \ldots, a_n))=m_{Q_{f_{\sigma}}}(r_{s_1}(a_1), \ldots,r_{s_n}(a_n) ) \tag{\mbox{H} }
\]

\noindent Let $w \in W_s$  be an ultrafilter of $\mathcal{A}_s$. We have to prove that  $f_{\sigma}(a_1, \ldots, a_n ) \in w$ iff   there exist $w_1 \in W_{s_1},\ldots , w_n\in W_{s_n}$ such that $ a_i \in w_i$ for any $i\in [n]$ and $Q_{f_{\sigma}} ww_1 \ldots w_n$. To simplify the notation, we fix the operation $\sigma$ and we use the notation $f$ instead of $f_\sigma$.

We prove (H) by induction on the arity  of $f$. Assume that $f$ is an unary operator. Therefore, we show that\begin{center}
$r_{s}(f(a))=m_{Q_f}(r_{s_1}(a))$
\end{center}
If $w \in m_{Q_f}(r_{s_1}(a)) $ there exists  $w_1 \in r_{s_1}(a) $ such that  $Q_f w w_1$. It follows that $a\in w_1$ and  $Q_f w w_1$, so $f(a) \in w $.

Conversely, assume that $w$ is an ultrafilter such that $w\in r_s(f(a))$, i.e. $f(a)\in w$. We need to prove that $w \in m_{Q_f}(r_{s_1}(a))$, so we have to find an ultrafilter $w_1$ such that $a \in w_1$ and $Q_f w w_1 $.  If  $F_{s_1}:=\{v \in A_{s_1}|\neg f(\neg v)\in w  \}$ then we prove that $F_{s_1}$ is closed under taking meets. Let $v, t \in F_{s_1} \Rightarrow  \neg f(\neg v)\in w $  and $\neg f(\neg t)\in w $. By additivity of $f$, $ f(\neg v)\vee  f(\neg t)=  f(\neg v \vee \neg  t)=f(\neg (v \wedge t))$, hence $\neg f(\neg (v \wedge  t)) = \neg (f(\neg v)\vee  f(\neg t)) = \neg f(\neg v)\wedge  \neg f(\neg t)$. Since $w$ is an ultrafilter and  $\neg f(\neg v), \neg f(\neg t)\in w $, we can infer that $\neg f(\neg (v \wedge  t))\in w$. We proved that  $F_{s_1}$ is closed under taking meets. It follows that the filter generated by $F_{s_1}$ is proper, so there is an ultrafilter $w_1$ such that $F_{s_1} \subseteq w_1$. Note that $\neg f(\neg v) \in w$ implies  $v \in w_1$, so  $Q_f w w_1$ it holds.

Assume that induction hypothesis (H) holds for $n$ and let $f$ de a normal and additive function or arity $n+1$. One can further proceed as in \cite[Theorem 5.43]{mod}.
\end{proof}

Finally, we prove the algebraic completeness of for systems of many-sorted modal logic. 

\begin{example}(The Lindenbaum-Tarski algebra of $\logl$)
For any $s\in S$ the theorems of classical propositional logic are in $\logl_s$ for any $s\in S$ and the boolean connectives $\vee$ and $\neg$ preserve the sort,  so
${\mathcal F}orm_s=(Form_s,\vee,\neg,\bot_s)$ is a boolean algebra.
We define $\ \ f_\sigma (\phi_1, \ldots, \phi_n):=\sigma(\phi_1, \ldots, \phi_n) $ for any $ \ \ \ \ \ \ \ \sigma\in \Sigma_{s_1\ldots s_n, s}$ and $\phi_1\in Form_{s_1},\ldots, \phi_n\in Form_{s_n}$. One can easily see that $\mathfrak{Form}= ( \{{\mathcal F}orm_s\}_{s \in S}, \lbrace f_\tr\rbrace_{\tr \in \Sigma})$ is an $(S,\Sigma)$-BAO.

Let $\logl$ be an $(S,\Sigma)$-polyadic normal modal logic defined as in Section \ref{mainlogded}. As usual, $\logl$ determines an equivalence relation on formulas. In our setting this relation is $S$-sorted:  $\equiv _{\logl}=\{\equiv_{\logl}^s\}_{s\in S}$  where 
\begin{center}
$\phi \equiv ^{s}_{\logl} \psi $ iff    $\vds{s}_{\logl}  \phi \leftrightarrow \psi $ for any $\phi,\psi\in Form_s$.
\end{center}
By Proposition \ref{genprop} (iv),  $\equiv_{\logl}$ is a congruence relation on $ \mathfrak{Form}$. We denote ${\mathcal L}_{\logl}^s :={\mathcal Form_s} \equiv_{\logl}^s$ for any $s\in S$ and 
\begin{center}
$ \mathfrak{L}_{\logl} = (\lbrace\mathcal{L}_{\logl}^s \rbrace_{s\in S},\{\tilde{f}\}_\tr\rbrace_{\tr \in \Sigma } )$
 \end{center}
We denote $[\vp]_{\logl}$ the class of $\vp$ determined by $\equiv_{\logl}^s$ (we simply write $[\vp]$ when $\logl$ is fixed). Hence 
$\tilde{f}_\sigma([\phi_1], \ldots, [\phi_n]):= [f_\sigma(\phi_1, \ldots, \phi_n)]= [\sigma(\phi_1, \ldots, \phi_n)]$
for any $\sigma\in \Sigma_{s_1\ldots s_n, s}$ and $\phi_1\in Form_{s_1},\ldots, \phi_n\in Form_{s_n}$. Note that $\mathfrak{L}_{\logl}$ satisfies (N) and (A) by Remark \ref{alt}.

The $(S,\Sigma)$-BAO $\mathfrak{L}_{\logl}$ is the {\em  Lindenbaum-Tarski algebra} of the $(S,\Sigma)$-polyadic normal modal logic $\logl$.
\end{example}

Our next goal is to prove the algebraic completeness for $\logl$.
Recall that $P=\{P_s\}_{s\in S}$ is the set of propositional variables. If $\mathfrak{A}=\left( \{\mathcal{A}_s\}_{s\in S}, \{f_\sigma\}_{\sigma \in \Sigma}\right)$ is an $(S,\Sigma)$-BAO, then an 
{\em assignment in $\mathfrak{A}$} is an $S$-sorted function $e=\{e_s\}_{s\in S}$, $e_s:P_s\to A_s$ for any $s\in S$. As usual, any assignment can be uniquely extended to ${\mathfrak e}:\mathfrak{Form}\to {\mathfrak A}$, a homomorphism of $(S,\Sigma)$-BAOs. We define 
$Taut(\mathfrak{A})=\{Taut(\mathfrak{A})_s\}_{s\in S}$ where 
\begin{center}
$Taut(\mathfrak{A})_s=\{\phi\in Form_s\mid \mathfrak{e}_s(\phi)=1_s \mbox{ in } A_s \mbox{ for any  assignment }e_s:P_s\to A_s \}$
\end{center}
 
One can easily prove  that $\logl$ is {\em sound with respect to assignments} in $(S,\Sigma)$-BAOs. Moreover, the following holds.  
 
\begin{proposition}\label{ismod} $Taut(\mathfrak{L}_{\logl})_s=
\{\phi\mid \,\,\phi\in Form_s, \,\, \vds{s}_{\logl}\phi\}$ for any $s\in S$.
\end{proposition}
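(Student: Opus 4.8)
The plan is to prove the two inclusions separately, using the standard Lindenbaum--Tarski machinery adapted to the $S$-sorted setting. Fix a sort $s\in S$. The key observation is that the canonical assignment $e^{\rm can}=\{e^{\rm can}_t\}_{t\in S}$ defined by $e^{\rm can}_t(p)=[p]_{\logl}$ for $p\in P_t$ extends to the natural quotient homomorphism $\mathfrak{e}^{\rm can}\colon\mathfrak{Form}\to\mathfrak{L}_{\logl}$ sending $\phi\mapsto[\phi]_{\logl}$; this is immediate by induction on $\phi$ since $\tilde f_\sigma([\phi_1],\dots,[\phi_n])=[\sigma(\phi_1,\dots,\phi_n)]$ by definition and the boolean operations pass to the quotient.

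For the inclusion $\{\phi\in Form_s\mid {\vds{s}}_{\logl}\phi\}\subseteq Taut(\mathfrak{L}_{\logl})_s$, suppose ${\vds{s}}_{\logl}\phi$. Then ${\vds{s}}_{\logl}\phi\leftrightarrow\top_s$ (using the classical propositional theorems available in $\logl_s$ noted in Section~\ref{mainlogded}), so $\phi\equiv^s_{\logl}\top_s$, hence $[\phi]_{\logl}=[\top_s]_{\logl}=1_s$ in $\mathcal{L}^s_{\logl}$. Now for an arbitrary assignment $e\colon P\to\mathfrak{L}_{\logl}$, I would invoke the fact that $\logl$ is sound with respect to assignments in $(S,\Sigma)$-BAOs (stated just before the proposition): since every instance of an axiom evaluates to $1$ and the rules (MP) and (UG) preserve the property of evaluating to $1$ under $\mathfrak{e}_s$, any theorem $\phi$ of $\logl$ satisfies $\mathfrak{e}_s(\phi)=1_s$. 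Thus $\phi\in Taut(\mathfrak{L}_{\logl})_s$.

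For the converse, suppose $\phi\in Taut(\mathfrak{L}_{\logl})_s$, i.e. $\mathfrak{e}_s(\phi)=1_s$ for every assignment $e$ into $\mathfrak{L}_{\logl}$. Apply this to the canonical assignment $e^{\rm can}$: then $1_s=\mathfrak{e}^{\rm can}_s(\phi)=[\phi]_{\logl}=[\top_s]_{\logl}$, so $\phi\equiv^s_{\logl}\top_s$, which by definition of $\equiv^s_{\logl}$ means ${\vds{s}}_{\logl}\phi\leftrightarrow\top_s$, and hence ${\vds{s}}_{\logl}\phi$ by modus ponens with the classical theorem ${\vds{s}}_{\logl}\top_s$. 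This gives $\phi\in\{\phi\in Form_s\mid{\vds{s}}_{\logl}\phi\}$, completing the equality.

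The only step requiring any care is verifying that the soundness of $\logl$ with respect to BAO-assignments is genuinely available --- but this is exactly the statement ``$\logl$ is sound with respect to assignments in $(S,\Sigma)$-BAOs'' asserted immediately before the proposition, whose proof is a routine induction on derivations checking that $(K^i_\sigma)$, $(Dual_\sigma)$ and the classical axioms evaluate to $1$ in every BAO (this uses (N) and (A) precisely as in Remark~\ref{alt}) and that (MP$_s$) and (UG$^i_\sigma$) preserve validity. So I expect no real obstacle; the main point is simply to recognize that testing against the single canonical assignment already forces provability, which is the usual Lindenbaum--Tarski argument transported sortwise.
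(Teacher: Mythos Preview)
Your argument for $Taut(\mathfrak{L}_{\logl})_s\subseteq\{\phi\mid{\vds{s}}_{\logl}\phi\}$ via the canonical assignment is exactly the paper's, and it is fine.

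The forward inclusion, however, has a gap. You invoke ``$\logl$ is sound with respect to assignments in $(S,\Sigma)$-BAOs'' and then spell this out as: $(K^i_\sigma)$, $(Dual_\sigma)$ and the classical axioms evaluate to $1$ in every BAO, and the rules preserve validity. But $\logl=\mathbf{K}\Lambda$ also contains the substitution instances of the formulas in $\Lambda$, and those need \emph{not} evaluate to $1$ in an arbitrary $(S,\Sigma)$-BAO. So the general soundness statement you rely on is either false as stated (for $\Lambda\neq\emptyset$) or, if read as ``sound for BAOs validating $\Lambda$'', it presupposes $\Lambda\subseteq Taut(\mathfrak{L}_{\logl})$, which is part of what you are trying to prove. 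Your final paragraph confirms the omission: you list only the $\mathbf{K}$-axioms.

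The paper closes this gap with a substitution argument that is specific to $\mathfrak{L}_{\logl}$ and does not go through general BAO soundness. Given an arbitrary assignment $e$ into $\mathfrak{L}_{\logl}$, each $e_t(p)$ has the form $[\psi_p]$ for some formula $\psi_p$; one then checks by induction that $\mathfrak{e}_t(\psi)=[\psi']$, where $\psi'$ is obtained from $\psi$ by the uniform substitution $p\mapsto\psi_p$. Since by Definition~\ref{clsub} the axiom set $\logl$ is closed under uniform substitution, every axiom $\psi\in\logl$ yields $\psi'\in\logl$, hence ${\vds{s}}_{\logl}\psi'$ and $\mathfrak{e}_t(\psi)=[\psi']=1$. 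This is precisely what makes the $\Lambda$-axioms go through, and it is the missing ingredient in your argument. Once you add this observation, your proof coincides with the paper's.
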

\begin{proof} It is a consequence of the fact that,  by Definition \ref{clsub},  $\logl$ is closed to $S$-sorted uniform substitutions.  

If $s\in S$ and $\phi\in Taut(\mathfrak{L}_{\logl})_s$ then $\mathfrak{e}(\phi)=[\top_s]$ for any assignment $e$ in  $\mathfrak{L}_{\logl}$. In particular, $[\phi]=[\top_s]$, so $\vds{s}_{\logl}\phi\leftrightarrow \top_s$, i.e.  $\vds{s}_{\logl}\phi$. Conversely, assume that $\vds{s}_{\logl}\phi$ and let $e$ be an assignment in $\mathfrak{L}_{\logl}$. For any $t\in S$ and propositional variable $p\in P_t$ there is $\psi_p\in Form_t$ such that $e_t(p)=[\psi_p]$. We can prove by structural induction that for any  $t\in S$ and $\psi\in Form_t$,  $\mathfrak{e}_t(\psi)=[\phi']$, where $\psi'$ is the formula obtained from $\psi$ by uniform substitution, replacing any propositional variable $p$ by $\psi_p$.  By Definition \ref{clsub} we infer that $\logl \subseteq Taut(\mathfrak{L}_{\logl})$. To finish the proof we note  that the deduction rules are sound  with respect to assignments in the Lindenbaum-Tarski algebra $\mathfrak{L}_{\logl}$.
\end{proof}

\begin{lemma}\label{free}
If $\logl\subseteq Taut(\mathfrak{A})$ then any assignment $e_s:P_s\to A_s$ for any $\ \ s\in S$ can be uniquely extended to a  homomorphism of $(S,\Sigma)$-BAOs $\tilde{{\mathfrak e}}:\mathfrak{L}_{\logl}\to {\mathfrak A}$ such that 
$\tilde{{\mathfrak e}}_s([\phi])=\mathfrak{e}_s(\phi)$ for any $s\in S$ and $\phi\in Form_s$.
\end{lemma}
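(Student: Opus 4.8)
The plan is to obtain $\tilde{\mathfrak{e}}$ by factoring the formula-algebra homomorphism $\mathfrak{e}:\mathfrak{Form}\to\mathfrak{A}$ (which exists by the freeness of $\mathfrak{Form}$ over $P$ recorded just before Proposition \ref{ismod}) through the canonical quotient map $\pi:\mathfrak{Form}\to\mathfrak{L}_{\logl}$, $\pi_s(\phi)=[\phi]$. Since $\mathfrak{L}_{\logl}=\mathfrak{Form}/{\equiv_{\logl}}$ by construction and $\pi$ is a surjective $(S,\Sigma)$-BAO homomorphism, by the universal property of quotient algebras it suffices to prove that the congruence $\equiv_{\logl}$ is contained in the kernel congruence of $\mathfrak{e}$. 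To that end, suppose $\phi\equiv^{s}_{\logl}\psi$ for some $s\in S$, that is, $\vds{s}_{\logl}\phi\lra\psi$; then $\phi\lra\psi\in\logl_s\subseteq Taut(\mathfrak{A})_s$ by hypothesis, so $\mathfrak{e}_s(\phi\lra\psi)=1_s$ for our fixed assignment $e$. Since $\mathfrak{e}$ is a boolean homomorphism on the sort $s$, this reads $\mathfrak{e}_s(\phi)\lra\mathfrak{e}_s(\psi)=1_s$, and in a boolean algebra $a\lra b=1_s$ holds iff $a=b$; hence $\mathfrak{e}_s(\phi)=\mathfrak{e}_s(\psi)$. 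Thus $\equiv_{\logl}$ refines $\ker\mathfrak{e}$.

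The universal property of the quotient then yields a unique $(S,\Sigma)$-BAO homomorphism $\tilde{\mathfrak{e}}:\mathfrak{L}_{\logl}\to\mathfrak{A}$ with $\tilde{\mathfrak{e}}\circ\pi=\mathfrak{e}$; explicitly, $\tilde{\mathfrak{e}}_s([\phi])=\mathfrak{e}_s(\phi)$ for all $s\in S$ and $\phi\in Form_s$. In particular $\tilde{\mathfrak{e}}_s([p])=\mathfrak{e}_s(p)=e_s(p)$ for $p\in P_s$, so $\tilde{\mathfrak{e}}$ genuinely extends the assignment $e$. Uniqueness is immediate: the displayed identity already determines $\tilde{\mathfrak{e}}_s$ on every element of ${\mathcal L}_{\logl}^s$, since each such element has the form $[\phi]$; equivalently, two $(S,\Sigma)$-BAO homomorphisms out of $\mathfrak{L}_{\logl}$ agreeing on the generating set $\{[p]\mid s\in S,\ p\in P_s\}$ must coincide.

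There is no genuine obstacle here: the argument is the standard fact that a homomorphism factors through any quotient by a congruence it respects. The one ingredient proper to our setting is the equivalence between $\vds{s}_{\logl}\phi\lra\psi$ and $\mathfrak{e}_s(\phi)=\mathfrak{e}_s(\psi)$, which rests entirely on the hypothesis $\logl\subseteq Taut(\mathfrak{A})$ and on the elementary boolean identity $a\lra b=1_s\iff a=b$; the only point needing a word of care is that $\mathfrak{e}:\mathfrak{Form}\to\mathfrak{A}$ is a well-defined $(S,\Sigma)$-BAO homomorphism in the first place, which is exactly the freeness of $\mathfrak{Form}$ over $P$ noted before Proposition \ref{ismod}.
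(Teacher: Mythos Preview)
Your proof is correct and follows exactly the paper's approach: show $\equiv_{\logl}\subseteq\ker(\mathfrak{e})$ using the hypothesis $\logl\subseteq Taut(\mathfrak{A})$, then invoke the universal property of quotients. The paper compresses this into a single sentence, whereas you spell out the boolean computation $\mathfrak{e}_s(\phi\lra\psi)=1_s\iff\mathfrak{e}_s(\phi)=\mathfrak{e}_s(\psi)$ and the uniqueness argument explicitly; one small point of care is that in this paper $\logl_s$ denotes the \emph{axioms} (closed under substitution) rather than the set of theorems, so the step from $\vds{s}_{\logl}\phi\lra\psi$ to $\mathfrak{e}_s(\phi\lra\psi)=1_s$ implicitly uses soundness of the deduction rules in any $(S,\Sigma)$-BAO, not merely the inclusion $\logl\subseteq Taut(\mathfrak{A})$---but the paper's own proof elides the same point.
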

\begin{proof} Since $\logl\subseteq Taut(\mathfrak{A})$ we get 
$\equiv_{\logl}\subseteq Ker(\mathfrak{e})$, so we apply the universal property of quotients. 
\end{proof}

\begin{theorem}(Algebraic completeness for $\logl$)\label{alg1}
With the above notations, the following are equivalent for any 
$s\in  S$ and $\phi\in Form_s$:
\begin{itemize}
\item[(i)] $\vds{s}_{\logl}\phi$
\item[(ii)] $[\phi]_{\logl}=[\top_s]_{\logl}$ in $\mathfrak{L}_{\logl}$
\item[(iii)] $\mathfrak{e}_s(\phi)=1_s$ in $A_s$ for any $(S,\Sigma)$-BAO  $\mathfrak{A}$ such that $\logl\subseteq Taut(\mathfrak{A})$ and any assignment $e$ in $\mathfrak{A}$.
\end{itemize}
\end{theorem}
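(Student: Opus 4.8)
The plan is to prove the cycle (i) $\Rightarrow$ (ii) $\Rightarrow$ (iii) $\Rightarrow$ (i), leaning on Proposition~\ref{ismod} and Lemma~\ref{free}, which already package most of the work.

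For (i) $\Rightarrow$ (ii): if $\vds{s}_{\logl}\phi$ then, since $\logl$ contains all classical tautologies on sort $s$ and is closed under modus ponens, we have $\vds{s}_{\logl}\phi\leftrightarrow\top_s$ (using $\vds{s}_{\logl}\top_s$ and propositional reasoning). By the definition of $\equiv_{\logl}^s$ this is exactly $\phi\equiv_{\logl}^s\top_s$, i.e. $[\phi]_{\logl}=[\top_s]_{\logl}$ in $\mathfrak{L}_{\logl}$.

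For (ii) $\Rightarrow$ (iii): suppose $[\phi]_{\logl}=[\top_s]_{\logl}$. Let $\mathfrak{A}$ be any $(S,\Sigma)$-BAO with $\logl\subseteq Taut(\mathfrak{A})$ and let $e$ be an assignment in $\mathfrak{A}$. By Lemma~\ref{free}, $e$ extends to a homomorphism of $(S,\Sigma)$-BAOs $\tilde{{\mathfrak e}}:\mathfrak{L}_{\logl}\to\mathfrak{A}$ with $\tilde{{\mathfrak e}}_s([\psi])=\mathfrak{e}_s(\psi)$ for all $s,\psi$. Then $\mathfrak{e}_s(\phi)=\tilde{{\mathfrak e}}_s([\phi]_{\logl})=\tilde{{\mathfrak e}}_s([\top_s]_{\logl})=\mathfrak{e}_s(\top_s)=1_s$, using that boolean-algebra homomorphisms send the top element to the top element. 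Since $\mathfrak{A}$ and $e$ were arbitrary, (iii) holds.

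For (iii) $\Rightarrow$ (i): apply (iii) to the specific $(S,\Sigma)$-BAO $\mathfrak{A}=\mathfrak{L}_{\logl}$, which satisfies $\logl\subseteq Taut(\mathfrak{L}_{\logl})$ by Proposition~\ref{ismod}. Hypothesis (iii) then says $\mathfrak{e}_s(\phi)=1_s$ in $\mathcal{L}_{\logl}^s$ for every assignment $e$ in $\mathfrak{L}_{\logl}$; in particular $\phi\in Taut(\mathfrak{L}_{\logl})_s$, and Proposition~\ref{ismod} gives $\vds{s}_{\logl}\phi$. This closes the cycle and there is no serious obstacle: the only subtlety, already handled by Lemma~\ref{free} and Proposition~\ref{ismod}, is that soundness with respect to BAO-assignments (needed to know $\equiv_{\logl}\subseteq Ker(\mathfrak{e})$ and to run the structural induction in Proposition~\ref{ismod}) uses precisely the normality and additivity axioms (N), (A) via Remark~\ref{alt}, so one should cite those rather than reprove them.
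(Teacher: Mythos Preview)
Your proof is correct and follows essentially the same approach as the paper: the paper organizes it as two biconditionals $(i)\Leftrightarrow(ii)$ and $(ii)\Leftrightarrow(iii)$ rather than a cycle, but the ingredients are identical---the equivalence of (i) and (ii) via the definition of $\equiv_{\logl}^s$, the passage (ii)$\Rightarrow$(iii) via Lemma~\ref{free} and preservation of the top element, and the remaining implication via Proposition~\ref{ismod} applied to $\mathfrak{L}_{\logl}$.
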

\begin{proof}
$(i)\Leftrightarrow (ii)$ We have $\vds{s}_{\logl}\phi$ iff $\phi\equiv_{\logl}^s\top_s$ for any $s\in S$.\\
$(ii)\Leftrightarrow (iii)$ One implication follows by Proposition \ref{ismod}.
For the other, assume $\mathfrak{A}$ is an $(S,\Sigma)$-BAO and $e$ is an assignment in $\mathfrak{A}$. Let $\tilde{{\mathfrak e}}:\mathfrak{L}_{\logl}\to {\mathfrak A}$ be the homomorphism defined in Lemma \ref{free}. Hence $\mathfrak{e}_s(\phi)= \tilde{{\mathfrak e}}_s([\phi])= \tilde{{\mathfrak e}}_s([\top_s])=1_s$ in $A_s$. Note that, for the last equality, we used the fact that  $\tilde{{\mathfrak e}}_s$  preserves the top element.   
\end{proof}

\begin{corollary}(Algebraic completeness for $\mathbf K$)\label{alg2}
For any $s\in S$ and $\phi\in Form_s$ $\vds{s}_{\mathbf K}\phi$ if and only if $\mathfrak{e}_s(\phi)=W_s$  for any $(S,\Sigma)$-frame 
${\mathcal F}=\left(\{ W_s \}_{s\in S}, \{\mathcal{R}_{\srb}\}_{\srb \in \Sigma}\right)$ and any assignment  $e$ in the full complex algebra $\mathfrak{F}$. 
\end{corollary}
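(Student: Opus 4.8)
The plan is to derive this as a direct specialization of Theorem~\ref{alg1} to the case $\Lambda=\emptyset$, i.e. $\logl=\mathbf K$, combined with the concrete description of the relevant $(S,\Sigma)$-BAOs as complex algebras. First I would observe that for $\mathbf K$ the set of axioms is empty, so the side condition ``$\logl\subseteq Taut(\mathfrak A)$'' in Theorem~\ref{alg1}(iii) reduces to the requirement that $\mathfrak A$ be an $(S,\Sigma)$-BAO validating only the base logic $\mathbf K$, which by the soundness of $\mathbf K$ (the corollary following Proposition~\ref{sound}, transported to the algebraic side) is automatic: $\mathbf K\subseteq Taut(\mathfrak A)$ holds for \emph{every} $(S,\Sigma)$-BAO $\mathfrak A$. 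Hence Theorem~\ref{alg1} already gives $\vds{s}_{\mathbf K}\phi$ iff $\mathfrak e_s(\phi)=1_s$ for every $(S,\Sigma)$-BAO $\mathfrak A$ and every assignment $e$ in $\mathfrak A$.

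Next I would restrict attention from arbitrary $(S,\Sigma)$-BAOs to the full complex algebras $\mathfrak F$. The forward direction is immediate: a full complex algebra is an $(S,\Sigma)$-BAO (as noted in the Complex algebras example) and $1_s$ in $\mathfrak P(W_s)$ is exactly $W_s$, so $\vds{s}_{\mathbf K}\phi$ forces $\mathfrak e_s(\phi)=W_s$ for every frame $\mathcal F$ and every assignment $e$ in $\mathfrak F$. For the converse, I would use the J\'onsson--Tarski theorem (Theorem~\ref{jttheorem}): suppose $\mathfrak e_s(\phi)=W_s$ holds in every full complex algebra for every assignment, and let $\mathfrak A$ be an arbitrary $(S,\Sigma)$-BAO with assignment $e$. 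By Theorem~\ref{jttheorem} the embedding $r\colon\mathfrak A\hookrightarrow\mathfrak F(\mathfrak A)$ of BAOs exists; compose $e$ with $r$ to obtain an assignment $r\circ e$ into the full complex algebra $\mathfrak F(\mathfrak A)$ of the ultrafilter frame $\mathcal Uf(\mathfrak A)$. Since $r$ is a BAO-homomorphism it commutes with the extension to formulas, so $r_s(\mathfrak e_s(\phi))=\widetilde{(r\circ e)}_s(\phi)=W_s^{\mathfrak A}$, the top of $\mathfrak F(\mathfrak A)$; because $r_s$ is an injective boolean homomorphism and $r_s(1_s)$ is the top, this yields $\mathfrak e_s(\phi)=1_s$ in $\mathfrak A$. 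Then Theorem~\ref{alg1} delivers $\vds{s}_{\mathbf K}\phi$.

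The only real bookkeeping point — and the step I expect to require the most care — is making sure that ``assignment in $\mathfrak F$'' as quantified in the corollary matches ``assignment in some full complex algebra'' as needed for the argument above: an assignment into $\mathfrak F(\mathfrak A)$ is exactly an assignment into the full complex algebra of the specific frame $\mathcal Uf(\mathfrak A)$, which is a legitimate $(S,\Sigma)$-frame, so the quantifier ``for any $(S,\Sigma)$-frame $\mathcal F$'' in the corollary does cover it. With that observation in hand the proof is essentially a two-line invocation: cite Theorem~\ref{alg1} for the reduction to BAO-validity, cite Theorem~\ref{jttheorem} to reduce BAO-validity to complex-algebra validity, and note $1_s = W_s$ in powerset algebras. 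I would write it as: ``By Theorem~\ref{alg1} with $\Lambda=\emptyset$, $\vds{s}_{\mathbf K}\phi$ iff $\mathfrak e_s(\phi)=1_s$ in every $(S,\Sigma)$-BAO. Every full complex algebra is such a BAO and its top on sort $s$ is $W_s$, giving the forward implication. Conversely, given an arbitrary $(S,\Sigma)$-BAO $\mathfrak A$ with assignment $e$, apply the J\'onsson--Tarski embedding $r\colon\mathfrak A\to\mathfrak F(\mathfrak A)$; the hypothesis applied to the frame $\mathcal Uf(\mathfrak A)$ and the assignment $r\circ e$ gives $r_s(\mathfrak e_s(\phi))$ equal to the top element, and injectivity of $r_s$ forces $\mathfrak e_s(\phi)=1_s$. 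Hence $\vds{s}_{\mathbf K}\phi$.''
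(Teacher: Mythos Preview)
Your proposal is correct and follows essentially the same route as the paper's proof, which simply cites Theorems~\ref{alg1} and~\ref{jttheorem} together with the fact that $\mathbf K\subseteq Taut(\mathfrak F)$ for every full complex algebra~$\mathfrak F$. You have spelled out in detail exactly how the J\'onsson--Tarski embedding mediates between validity in all $(S,\Sigma)$-BAOs and validity in all full complex algebras, which is precisely the content the paper leaves implicit in its one-line citation.
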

\begin{proof} It follows by Theorems \ref{alg1}, \ref{jttheorem} and the fact that ${\mathbf K}\subseteq Taut(\mathfrak{F})$ for any full complex algebra ${\mathfrak F}$.
\end{proof}


\section{Related logical systems}\label{exlog}

Traditionally, program verification within \textit{modal logic}, as showcased by \textit{dynamic logic} \cite{dynamic},  is following the mainstream axiomatic approach proposed by Hoare/Floyd \cite{hoare,floyd}. More recently, Ro\c su \cite{rosu} proposed \textit{matching logic} and \textit{reachability logic} as an alternative way to prove program correctness, using directly the (executable) operational semantics of a language.  The completeness theorem for matching logic is proved using a interpretation in the first-order logic with equality. The starting point of our investigation was the representation of the  (mono-sorted) polyadic modal logic  as a particular system of matching logic in \cite[Section 8]{rosu}. Our initial goals were: to understand the propositional part of matching logic, to give a self-contained proof of the completeness theorem, to identify the algebraic theory and to  investigate  the relation with modal logic. The many-sorted system ${\mathbf K}$ is an initial step in this direction. 

{\em Matching logic}, defined in \cite{rosu}, is a many-sorted first-order logic for program specification and verification
(see Appendix B for a brief presentation).  The formulas of Matching logic are called {\em patterns}.  Semantically, a pattern is interpreted as a set of elements, with the restriction that variables are interpreted as singletons. A sound and complete system of axioms is defined in \cite{rosu}, the completeness being proved by translation in  pure first-order logic with equality. As shown in \cite{rosu}, classical propositional logic, pure predicate logic, separation logic can be seen as instances of Matching logic. In \cite[Section 8]{rosu} the (mono-sorted) polyadic modal logic is represented as a particular system of Matching logic. This was the starting point of our investigation. 

Let  $(S,\Sigma)$ be  a many-sorted signature. A {\em Matching logic $(S,\Sigma)$-model} is 
$M=(\{M_s\}_{s\in S},\{\sigma_M\}_{\sigma\in\Sigma})$ where 
$\sigma_M : M_{s_1}\times \ldots \times M_{s_n} \rightarrow \mathcal{P}(M_s)$ for any  $ \sigma \in \Sigma_{s_1 \ldots s_n,s}$.  For $\sigma\in \Sigma_{s_1 \ldots s_n,s}$ we define \mbox{$R_\sigma\subseteq M_{s_1}\times\cdots\times M_{s_n}$} by $R_\sigma w w_1\ldots w_n$ iff $w\in \sigma_M(w_1,\ldots, w_n)$. Hence ${\mathcal M}=(\{M_s\}_{s\in S}, \{R_\sigma\}_{\sigma\in \Sigma})$ is an $(S,\Sigma)$-frame so we consider ${\mathfrak M}=(\{{\mathfrak P}(M_s)\}_{s\in S},\{m_\sigma\}_{\sigma\in \Sigma})$ the full complex algebra determined by $\mathcal M$. One can easily see that $m_\sigma =\tilde{\sigma}_M$ for any $\sigma\in \Sigma$, where $\tilde{\sigma}_M$ is uniquely defined by $\sigma_M$ (note that in \cite[Definition 2.2]{rosu} $\tilde{\sigma}_M$ is identified with $\sigma_M$). 

\begin{remark}\label{alg3}
To any Matching logic model we can associate an $(S,\Sigma)$-frame and vice versa.  
\end{remark}

Note that the variables of Matching logic are always interpreted as singletons, while our propositional variables have arbitrary values. 
Let $(S,\Sigma_P)$ be the signature obtained by adding a constant operation  symbol  for any propositional variables $p\in P_s$ and for any $s\in S$. Hence any formula in our $(S,\Sigma)$-polyadic modal logic is a formula in Matching logic over $(S,\Sigma_P)$. 
 
The relation between the many-sorted polyadic modal logic developed in this paper and Matching logic can be summarized as follows:
\begin{itemize}
\setlength{\itemsep}{1pt}
  \setlength{\parskip}{0pt}
  \setlength{\parsep}{0pt}
\item  the system $\mathbf K$ presented in this paper can be seen as the {\em propositional counterpart of Matching logic},
\item  the system $\mathbf K$ offers direct proofs for various types of completeness results,
\item by Remark \ref{alg3} and Corollary \ref{alg2}, the algebraic theory of Matching logic is the theory of 
 many-sorted Boolean algebras with operators, 
\item   { the variables of Matching logic are similar with the {\em state variables} in hybrid modal logic, so our conjecture is that {\em Matching logic can be presented as a hybridized many-sorted polyadic modal logic}. This will be the subject of further investigations.} 
\end{itemize}

Even if Matching logic was the starting point of our research, one main issue was to connect our logic with already existing systems of many-sorted modal logic.

\begin{example} In \cite{manys2} the author defines a sound and complete two-sorted modal logic for projective planes. If we set $S=\{p,l\}$ ($p$ for points, $l$ for lines), $\Sigma_{l,p}=\{\langle 01 \rangle\}$  and $\Sigma_{p,l}=\{\langle 10 \rangle\}$, $[01]:=\langle 01\rangle ^{\mb}$ and $[10]:=\langle 10\rangle ^{\mb}$ then our modal language ${\mathcal M}{\mathcal L}_{(S,\Sigma)}$ is the modal language $MLG_2$ defined in \cite{manys2}. Moreover, if $\Lambda=\{CJ, D_l, D_p,4_{\langle .\rangle}, 4_{\langle -\rangle}\}$ from \cite[Definition 4.1]{manys2} then our system $\logl$ is equivalent with the system AXP from \cite{manys2}. Consequently, all our general results stand for AXP.
\end{example}

\begin{example}
The {\em Propositional Dynamic logic with Arrows} defined in  \cite[Section 4]{manys3} is a two-sorted modal logic. Let $S=\{p,s\}$ ($p$ for programs, $s$ for statements), $\Sigma_{p,s}=\{{\scriptstyle <\cdot >}\}$, $\Sigma_{s,p}=\{?\}$, $\Sigma_{pp,p}=\{{\scriptscriptstyle{^\bullet}}\}$, 
$\Sigma_{p,p}=\{^\otimes\}$, $\Sigma_{\lambda,p}=\{Id\}$, $\Sigma_{p,p}=\{^*\}$. Here $\scriptscriptstyle{^\bullet}$ denotes the composition, $^\otimes$ denotes the converse and $Id$ is the identity, so on the sort $p$ one can define a system of arrow logic. 
The operation $^*$ is the program iteration, while  $\scriptstyle <\cdot >$ and  $?$ connect programs and statements. In \cite{manys3} the following deduction rule is considered:
{\em if $\phi\to\alpha$ and $\alpha{\scriptscriptstyle{^\bullet}}\alpha\to\alpha$ are provable, then $\phi^*\to\alpha$ is provable.}
For the moment, our many-sorted modal logic has  only {\em modus ponens} and {\em universal generalization} as deduction rules so one should study what happens when new deduction rules are added. However we notice that using $\Delta, L, R$ instead on $\scriptstyle <\cdot >,\,\,  ?$ as suggested in \cite[Section 4]{manys3} one can replace the above deduction rule with the induction axioms for  iteration \cite[Section 5.5]{dynamic}. Even if more is needed for a  complete development, such a system would fit in our setting and would benefit of our general results. 
 \end{example}

\begin{example} In \cite{contextModal}, a modal semantics is given for {\em Context logic} and {\em Bunched logic}, both being logics for reasoning about data structures. These logics are interpreted into a two-sorted polyadic normal modal logic. The sound and complete axiomatization given in \cite[Section 3]{contextModal} fit in our development so all our general results stand in this case.  
\end{example}

%
\section{A modal logic approach to operational semantics}\label{os}

In the sequel, we take first steps in exploring the amenability of dynamic logic in particular, and of modal logic in general, to express operational semantics of languages (as axioms), and to make use of such semantics in program verification. 

In the remainder of the paper we develop a particular system $\mathbf{K}\Lambda$ which uses dynamic logic in a many-sorted setting. Our goal is to express operational semantics of languages as axioms in this logic, and to make use of such semantics in program verification.
 As a first step towards that goal, we consider here the SMC Machine described by Plotkin \cite{plotkin}, we derive a Dynamic Logic set of axioms from its proposed transition semantics, and we argue that this set of axioms can be used to derive Hoare-like assertions regarding functional correctness of programs written in the SMC machine language.

The semantics of the \textit{SMC machine} as laid out by Plotkin consists of a set of transition rules defined between configurations of the form $\left\langle S, M, C \right\rangle$ where $S$ is a $ValueStack$ of intermediate results, $M$ represents the $Memory$, mapping program identifiers to values, and $C$ is the $ControlStack$ of commands representing the control flow of the program.
Inspired by the \textit{Propositional Dynamic Logic} (PDL) \cite{dynamic}, we identify the $ControlStack$ with the notion of ``programs'' in dynamic logic, and use the ``;'' operator from dynamic logic to denote stack composition.  We define our formulas to stand for {\em configurations} of the form $config(vs, mem)$
comprising only a value stack and a memory.
 Similarly to PDL, we use the modal operator
$ [\_]\_ : ControlStack \times Config \to Config$
to assert that a configuration formula must hold after executing the commands in the control stack.
The axioms defining the dynamic logic semantics of the SMC machine are then formulas of the form 
$cfg \to [ctrl] cfg'$ 
saying that a configuration satisfying $cfg$ must change to one satisfying $cfg'$ after executing $ctrl$.

\begin{figure}[!htb]
\small{

\noindent \begin{minipage}[t]{.45\textwidth}

{\bf Syntax}

\begin{verbatim}
 Nat ::=  natural numbers
 Var ::=  program variables
Bool ::= true | false
AExp ::= Nat | Var | AExp + AExp
BExp ::= AExp <= AExp
Stmt ::= x := AExp
        | if BExp 
          then Stmt 
          else Stmt
        | while BExp do Stmt
        | skip
        | Stmt ; Stmt
\end{verbatim}
\end{minipage}\hfill 
\begin{minipage}[t]{.45\textwidth}
{\bf Semantics}
\begin{verbatim}
     Val ::= Nat | Bool
ValStack ::= nil 
             | Val . ValStack
     Mem ::= empty | set(Mem, x, n) 
             | get(x,n)
CtrlStack ::= c(AExp)
             | c(BExp)  
             | c(Stmt)
             | asgn(x)   
             | plus   | leq    
             | Val ?
             | c1 ; c2
  Config ::=  config(ValStack, Mem)          

\end{verbatim}
\end{minipage}
}
\caption{\bf Signature}
\end{figure}

In Fig. 1  we introduce the signature of our logic as a context-free grammar (CFG) in a BNF-like form.
We make use of the established equivalence between CFGs and algebraic signatures (see, e.g., \cite{HHKR89}), mapping non-terminals to sorts and CFG productions to operation symbols.
Note that, due to non-terminal renamings (e.g., Exp ::= Int), it may seem that our syntax relies on subsorting.  However, this is done for readability reasons only.  The renaming of non-terminals in syntax can be thought of as syntactic sugar for defining injection functions.  For example, Exp ::= Int can be thought of as Exp ::= int2Exp(Int), and all occurrences of an integer term in a context in which an expression is expected could be wrapped by the int2Exp function.

The sorts \texttt{CtrlStack} and \texttt{Config} correspond to "programs" and "formulas" from PDL, respectively. 
  {Therefore the usual operations of dynamic logic $;$, $\cup$, $^*$,  $[\_]$ are defined accordingly \cite[Chapter 5]{dynamic}. We depart from PDL with the definition of $?$ (test): in our setting, in order to take a decision, we test the top value of the value stack. Consequently,   the signature of the test operator is $?:Val\to CtrlStack$.}

We are ready to define our axioms. For the rest of the paper, whenever $phi$ is a theorem of sort $s$, i.e. $\vds{s}\phi$, we will simply write $\vdash\phi$, since the sort $s$ can be easily inferred.

The first group of axioms is inspired by the axioms of PDL \cite[Chapter 5.5]{dynamic}.
In the following, $\pi$, $\pi'$ are formulas of sort \texttt{CtrlStack} ("programs"), $\gamma$ is a  formula of sort \texttt{Config} (the analogue of "formulas" from PDL), $v$ and $v'$ are variables of sort \texttt{Var}, $vs$ has the sort \texttt{ValStack} and $mem$ has the sort \texttt{Mem}.

\medskip
\noindent{\bf PDL-inspired axioms}\\
$\begin{array}{ll}
(A\cup) & [\pi \cup \pi'] \gamma \leftrightarrow [\pi] \gamma \wedge [\pi'] \gamma \\
(A;) & [\pi;\pi'] \gamma \leftrightarrow [\pi][ \pi'] \gamma \\
(A^*) & [\pi^*] \gamma \leftrightarrow \gamma \wedge [\pi][\pi^*] \gamma\\
(A?) &  config(v \cdot  vs, mem) \to [v ?] config(vs,mem) \\

(A\neg ?) &  config(v \cdot  vs, mem) \to [v' ?] \gamma  \mbox{ where } v \mbox{ and } v' \mbox{ are distinct}.\\
\end{array}$

Next, we encode the transition system of the SMC machine as a set of axioms. Apart from the axioms for memory (which are straight-forward), we follow the rules of the SMC machine as closely as allowed by the formalism, using the same notation as in \cite{plotkin}. The sort of each variable can be easily deduced.

\medskip
\noindent{\bf SMC-inspired axioms}\nopagebreak 
 
$\begin{array}{ll}
(CStmt)& c(s1;s2)\lra c(s1);c(s2)\\ 
(AMem0) & empty \to get(x, 0)\\

(AMem1)  &  set(mem, x, n) \to  get(x, n)\\
(AMem2) & set(set(mem, x, n),y,m)\lra set(set(mem, y,m),x,n) \\
&\mbox{where } x \mbox{ and } y \mbox{ are distinct}\\
(AMem3) & set(set(mem, x, n),x,m)\lra set(mem, x,m)\\

(Aint) &   config(vs,mem) \to [c(n)] config(n \cdot vs,mem) \\
&\mbox{where } n \mbox{ is an integer}\\
(Aid)  &  config(vs, set(mem,x,n)) \to [c(x)] config(n \cdot vs,set(mem,x,n))\\
(Dplus) &    c(a1 + a2) \lra c(a1) ; c(a2) ; plus \\
(Aplus)  &  config(n2 \cdot n1 \cdot vs,mem) \to [plus] config(n \cdot vs,mem)\\
&\mbox{where } n \mbox{ is } n1 + n2 \\
(Dleq)   & c(a1 <= a2) \lra c(a2) ; c(a1) ; leq \\
(Aleq)    &config(n1 \cdot n2 \cdot vs,mem) \to  [leq] config(t \cdot vs,mem) \\
& \mbox{where } t \mbox{ is the truth value } \mbox{of } n1 \leq n2 \\
(Askip) &    \gamma \to  [c(skip)]\gamma \\
\end{array} $

$\begin{array}{ll}
(Dasgn) &   c(x := a) \lra c(a) ; asgn(x)\\
(Aasgn)  &  config(n \cdot vs,mem) \to [asgn(x)] config(vs,set(mem, x, n))\\
(Dif)  &  c(if\,\, b\,\, then\,\, s1 \,\,else\,\, s2) \lra c(b) ;
       ( (true \, ?; c(s1))\cup (false \, ?; c(s2)) ) \\
(Dwhile)  &  c(while\,\, b\,\, do\,\, s) \lra c(b) ; 
(true ? ; c(s); c(b))^*; false ? \\
\end{array}$

\medskip

Our  logical system is ${\mathbf K}\Lambda$, where $\Lambda$ contains all the PDL-inspired and the SMC-inspired axioms.
The general theory from the previous chapters provides two completeness results:
the completeness with respect to the canonical  frame (Theorem \ref{corhelp}) and the completeness with respect to the class of many-sorted boolean algebras with operators that are models of $\Lambda$ (Theorem \ref{alg1}). The canonical model is a \textit{non-standard} model with respect to the dynamic logic fragment of our system, meaning that the interpretation of the $c^*$ operation might not be the reflexive and transitive closure of the interpretation of $c$, where $c$ is a formula of sort \texttt{CtrlStack}.

Note that our PDL-inspired set of axioms does not include the \textit{induction axiom}:
\vspace*{-0.2cm}
\[ \gamma \wedge [\pi^*](\gamma\lra [\pi]\gamma )\lra [\pi^*]\gamma\]

 {The system presented in this paper can be used to certify executions, but we cannot perform symbolic verification. Adding the induction axiom would not solve this problem, the system should also be extended with quantifiers. Future research will adress this issue.}  
 
 \medskip

We conclude by a simple example formalizing and stating a formula which can be proven by deduction in our logic. Let $pgm$ be the following  program
\begin{center}

\texttt{i1:= 1; i2:= 2; if i1<=i2 then m:= i1 else m:= i2}
\end{center}
\noindent Note that $pgm$ is a formula of sort \texttt{Stmt} in our logic, $m$ is a formula of sort \texttt{Var} and $1$ is a formula of sort \texttt{Nat}. For this formula we can state and prove the following property:

\medskip

\noindent (P$_{pgm}$) $\,\,\vdash  config(vs,mem)\to[c(pgm)]config(vs,mem')$ implies \\ $\vds{\texttt{Mem}}\, mem'\to get(m,1)$ {for any  $mem, mem'$ of sort \texttt{Mem} and  $vs$  of sort \texttt{ValStack}.
}
\medskip

Which, can be read in plain English as: after executing $pgm$ the value of the program variable $m$ (in memory) will be $1$, and the value stack will be the same as before the execution.


Next we simplify the program \textit{pgm} to basic commands:

\medskip

$c(pgm) \lra c(i1 := 1) ; c(i2:=2) ; c(if i1<=i2 then m:= i1 else m:= i2) \lra$
\medskip

$\lra c(1) ; asgn(i1) ; c(2) ; asgn(i2) ; c(i1 <= i2) ;  $

\hspace*{3cm}$((true? ; c(m:=i1))\cup (false?;c(m:=i2))$

\medskip

$\lra c(1) ; asgn(i1) ; c(2) ; asgn(i2) ; c(i1);c(i2) ; leq ;  $

\hspace*{3cm}$((true? ; c(i1); asgn(m))\cup (false?;c(i2);asgn(m))$

\medskip

In the sequel, we give the main steps of the proof, each proof being a theorem of our ${\mathbf K}\Lambda$ system. Note that $Tranz$ denotes the following easily derived deduction rule: 

$(Tranz)$ \textit{if $\vds{s}\phi\to\psi$ and $\vds{s}\psi\to\chi$ then  $\vds{s}\psi\to \chi$}.

\medskip

{\bf Proof of} (P$_{pgm}$)

\medskip

(1) $config(vs,mem)\to [c(1)]config(1\cdot vs,mem)$  

\hfill $Aint$

(2) $config(1\cdot vs,mem) \to [asgn(i1)]config(vs,set(mem, i1,1))$  

\hfill $Aasgn$

(3) $[c(1)]config(1\cdot vs,mem)\to [c(1)][asgn(i1)]config(vs,set(mem,i1,1))$
 
 \hfill $UG(2),K$

(4) $[c(1)]config(1\cdot vs,mem)\to [c(1); asgn(i1)]config(vs,set(mem, i1,1))$

\hfill $A;$

(5) {$config(vs,mem)\to [c(1); asgn(i1)]config( vs,set(mem,i1,1))$}

\hfill $(1),(4),Tranz$

(6) $config(vs,set(mem,i1,1)) \to [c(2)]config(2\cdot vs, set(mem,i1,1))$ 

\hfill $Aint$

(7) $config(2\cdot vs, set(mem, i1,1)) \to$

\hspace*{2cm}$ \to[asgn(i2)]
config(vs, set(set(mem, i1,1), i2,2))$

\hfill  $Aasgn$

(8) $[c(2)]config(2\cdot vs, set(mem, i1,1)) \to$

\hspace*{2cm}$ \to [c(2)][asgn(i2)]
config(vs, set(set(mem, i1,1), i2,2))$

\hfill  $UG(7),K$

(9) {$config(vs,mem)\to [c(1); asgn(i1);$ }

\hspace{4cm} {$c(2);asgn(i2)]
config(vs, set(set(mem, i1,1), i2,2))$}

 \hfill {$UG (7),K,Tranz$}

(10) $config(vs, set(set(mem, i1,1), i2,2))\to$

\hfill $\to [c(i2)]config(2\cdot vs, set(set(mem, i1,1), i2,2)))$ 

\hfill $Aid$

(11) $config(vs, set(set(mem, i1,1), i2,2))\to$

\hfill $\to [c(i2)]config(2\cdot vs, set(set(mem, i2,2), i1,1)))$ 

\hfill $AMem2, UG(10),K$

(12) $config(2\cdot vs, set(set(mem, i2,2), i1,1)\to$

\hfill $\to [c(i1)]config(1\cdot 2\cdot vs, set(set(mem, i2,2), i1,1)))$  

\hfill $Aid$

(13) $config(vs, set(set(mem, i1,1), i2,2))\to$

\hfill $\to [c(i2)][c(i1)]config(1\cdot 2\cdot vs, set(set(mem, i2,2), i1,1)))$

\hfill $UG(12),K,Tranz$

(14) {$config(vs,mem)\to [c(1); asgn(i1);$ }

\hspace{4cm}{$c(2);asgn(i2)]$}

\hspace{4cm}{$[c(i2)][c(i1)]config(1\cdot 2\cdot vs, set(set(mem, i2,2), i1,1)))$}

 \hfill {$UG (13),K,Tranz$}

(15) $config(1\cdot 2\cdot vs, set(set(mem, i2,2), i1,1)))\to$

\hfill $\to [leq]config(true\cdot vs, set(set(mem, i2,2), i1,1))$ $Aleq$

(16) {$config(vs,mem)\to [c(1); asgn(i1);c(2);asgn(i2);c(i2);c(i1);leq]$}

\hspace{4cm}{$config(true\cdot vs, set(set(mem, i2,2), i1,1)))$}

 \hfill {$UG (15),K,Tranz$}

 (17) $config(true\cdot vs,set(set(mem, i2,2), i1,1)))\to$

\hfill $\to  [true?]config(vs, set(set(mem, i2,2), i1,1))$  

\hfill $A?$

(18) $config(vs, set(set(mem, i2,2), i1,1))\to$

\hfill $ [c(i1)]config(1\cdot vs, set(set(mem, i2,2), i1,1))$
 
\hfill $Aid$

(19) $config(1\cdot vs, set(set(mem, i2,2), i1,1]))\to$

\hfill $[asgn(m)]config(vs, set(set(set(mem, i2,2), i1,1),m,1))$ 

\hfill $Aasgn$

(20) $config(true\cdot vs,set(set(mem, i2,2), i1,1)))\to$

\hspace*{1cm}$[true?; c(i1); asgn(m)]config(vs, set(set(set(mem, i2,2), i1,1),m,1))$ 

\hfill $UG(17), UG(18), UG(19), K, Tranz$

(21) $config(true\cdot vs,set(set(mem, i2,2), i1,1)))\to$

\hspace*{1cm}$[false?][c(i2); asgn(m)]config(vs, set(set(set(mem, i2,2), i1,1),m,1))$ 

\hfill $A\neg ?$

(21') $config(true\cdot vs,set(set(mem, i2,2), i1,1)))\to$

\hspace*{1cm}$[false ? ; c(i2); asgn(m)]config(vs, set(set(set(mem, i2,2), i1,1),m,1))$ 

\hfill $A;$

(22) $config(true\cdot vs,set(set(mem, i2,2), i1,1)))\to$

\hspace*{1cm}$ [true?; c(i1); asgn(m)]config(vs, set(set(set(mem, i2,2), i1,1),m,1))\wedge $

\hspace*{1cm}$ [false ? ; c(i2); asgn(m)]config(vs, set(set(set(mem, i2,2), i1,1),m,1))$

\hfill  propositional logic

(22') $config(true\cdot vs,set(set(mem, i2,2), i1,1)))\to$

\hfill $[(true?; c(i1); asgn(m))\cup (false ? ; c(i2); asgn(m))]$ \hfill\;

\hfill $config(vs, set(set(set(mem, i2,2), i1,1),m,1)) $

\hfill  $A\cup$

(23) {$config(vs,mem)\to [c(pgm)] config(vs, set(set(set(mem, i2,2), i1,1),m,1)) $}

\hfill $UG(22'),Tranz$

\vspace*{0.5cm}

\medskip

We proved that

$\vdash config(vs,mem)\to [c(pgm)]config(vs, set(set(set(mem, i2,2), i1,1),m,1))$

and we know $\vdash set(set(set(mem, i2,2), i1,1),m,1)\to get(m,1)$ so

$\vdash config(vs,mem)\to [c(pgm)]config(vs, memf)$ implies $\vdash memf\to get(m,1)$

which ends our proof.

\medskip
\noindent\textit{Related Work.}
It has been shown~\cite{contextModal} that \textit{Separation Logic}~\cite{separation} (as branch-logic over the theory of maps) can be faithfully represented in modal logic; therefore one can employ dynamic logic reasoning directly on top of separation logic.  Instead of following that approach, we here follow more closely the line of work of \textit{Matching Logic}~\cite{rosu}, using directly the operational semantics rather than the traditional Floyd/Hoare axiomatic semantics.

 \newpage

\bibliographystyle{fundam}


\end{document}